\newcommand{\bx}{\boldsymbol{x}}
\newcommand{\by}{\boldsymbol{y}}
\newcommand{\bv}{\boldsymbol{v}}
\newcommand{\bq}{\boldsymbol{q}}
\newcommand{\bs}{\boldsymbol{s}}
\newcommand{\bS}{\boldsymbol{S}}
\newcommand{\bw}{\boldsymbol{w}}
\newcommand{\bft}{\boldsymbol{f}} 
\newcommand{\curl}{\operatorname{curl}} 
\newcommand{\scurl}{ {\operatorname{curl}^{*} }}
\theoremstyle{plain}
\newtheorem{theorem}{Theorem}[section]
\newtheorem{df}[theorem]{Definition}
\newtheorem{lemma}[theorem]{Lemma}
\numberwithin{equation}{section}
\numberwithin{table}{section}
\numberwithin{figure}{section}
\begin{document}
	
	\title[]{Generalized vector potential and Trace Theorem for Lipschitz domains }
	\author {Zhen Liu}
	\address{LMAM and School of Mathematical Sciences, Peking University, Beijing 100871, P. R. China.\\ Chongqing Research Institute of Big Data, Peking University, Chongqing 401332, P. R. China. zliu37@pku.edu.cn}
	\author {Jinbiao Wu}
	\address{LMAM and School of Mathematical Sciences, Peking University, Beijing 100871, P. R. China.\\ Chongqing Research Institute of Big Data, Peking University, Chongqing 401332, P. R. China. jwu@math.pku.edu.cn}
	
	\maketitle
	
	\begin{abstract}
		The vector potential is a fundamental concept widely applied across various fields. This paper presents an existence theorem of a vector potential for divergence-free functions in $W^{m,p}(\mathbb{R}^N,\mathbb{T})$ with general $m,p,N$. Based on this theorem, one can establish the space decomposition theorem for functions in $W^{m,p}_0(\curl;\Omega,\mathbb{R}^N)$ and the trace theorem for functions in $W^{m,p}(\Omega)$ within the Lipschitz domain $\Omega \subset \mathbb{R}^N$. The methods of proof employed in this paper are straightforward, natural, and consistent. \\ 
		\textbf{Keywords}: Generalized Vector Potential Theorem; Trace Theorem; Decomposition Theorem; Lipschitz Domain
	\end{abstract}

	\section{Introduction}
	The vector potential plays a crucial role in physics and engineering, with applications spanning electromagnetic phenomena \cite{panofsky2005classical,griffiths2023introduction}, fluid dynamics \cite{batchelor1967introduction}, and quantum mechanics \cite{phillips2013introduction,aharonov1959significance}. Its versatility and importance arise from its ability to concisely represent intricate vector fields, thereby enabling the analysis of various physical systems.
	
	Given a divergence-free vector function $\bv$, it is cumbersome to construct a vector potential $\bw$ satisfying $\curl \bw = \bv$. Additionally, some researchers are concerned not only with the existence of the vector potential but also with its regularity. 
	The following theorem shows the existence of regular vector potentials for some divergence-free Sobolev functions with $p=2$ in $\mathbb{R}^3$. {There are many analyses and proofs of theorems related to vector potential based on this result, see \cite{hiptmair2002finite,amrouche1998vector}.} 
	\begin{theorem}
		\label{thmregular1}
		For any integer $m \geq 0$, there exists a continuous mapping
		$$ \mathrm{L} :  {H}(\operatorname{div}  0;\mathbb{R}^3, \mathbb{R}^3) \cap {H}^{m}(\mathbb{R}^3, \mathbb{R}^{3}) \rightarrow {H}^{m+1}_{\mathrm{loc}}(\mathbb{R}^{3}, \mathbb{R}^3),$$
		such that $\operatorname{curl}  \mathrm{L} \bv = \boldsymbol{v}$ and $\operatorname{div} \mathrm{L} \bv= 0$.    
	\end{theorem}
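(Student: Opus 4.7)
My plan is to construct $L$ explicitly via the Newtonian-potential (Biot--Savart) formula. I will set $G(x)=1/(4\pi|x|)$, the fundamental solution of $-\Delta$ on $\mathbb{R}^{3}$, and define
$$L\bv \;:=\; \curl(G*\bv), \qquad \text{equivalently}\qquad \widehat{L\bv}(\xi) \;=\; \frac{i\,\xi\times\hat{\bv}(\xi)}{|\xi|^{2}}.$$
This is the natural candidate: transcribing the conditions $\curl\bw=\bv$ and $\operatorname{div}\bw=0$ into Fourier variables and using $\xi\times(\xi\times\hat{\bw})=-|\xi|^{2}\hat{\bw}$ (which holds because $\xi\cdot\hat{\bw}=0$) yields precisely this formula.

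\textbf{Verifying the two identities.} First I will check that $\widehat{L\bv}$ makes sense as a tempered distribution. Since $\bv\in L^{2}$, one has $\hat{\bv}\in L^{2}$, and $\hat{\bv}(\xi)/|\xi|$ will be locally integrable near the origin (Cauchy--Schwarz against $1/|\xi|^{2}$, integrable on bounded subsets of $\mathbb{R}^{3}$) and in $L^{2}$ at infinity. I will then invoke the vector identity $\curl\curl=\nabla\operatorname{div}-\Delta$ together with $-\Delta G=\delta_{0}$ to compute
$$\curl(L\bv) \;=\; -\Delta(G*\bv)+\nabla\operatorname{div}(G*\bv) \;=\; \bv + \nabla(G*\operatorname{div}\bv) \;=\; \bv,$$
where the last equality uses $\operatorname{div}\bv=0$. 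The identity $\operatorname{div}(L\bv)=0$ will be automatic as the divergence of a curl. Linearity of the map $\bv\mapsto L\bv$ is manifest from the formula.

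\textbf{Regularity, continuity, and the main difficulty.} The crux will be obtaining $H^{m+1}_{\mathrm{loc}}$-regularity together with continuous dependence. I will decompose $\hat{\bv} = \hat{\bv}\,\mathbf{1}_{\{|\xi|\geq 1\}} + \hat{\bv}\,\mathbf{1}_{\{|\xi|<1\}}$, giving a corresponding split $L\bv = L\bv_{\mathrm{hi}}+L\bv_{\mathrm{lo}}$. On high frequencies, $|\xi|^{m+1}|\widehat{L\bv_{\mathrm{hi}}}(\xi)|\leq |\xi|^{m}|\hat{\bv}(\xi)|$, so $L\bv_{\mathrm{hi}}$ belongs to $H^{m+1}(\mathbb{R}^{3})$ with norm controlled by $\|\bv\|_{H^{m}}$. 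The low-frequency part $\widehat{L\bv_{\mathrm{lo}}}$ will have compact support in $\{|\xi|<1\}$ and lie in $L^{1}$, so by Paley--Wiener $L\bv_{\mathrm{lo}}$ is the restriction to $\mathbb{R}^{3}$ of an entire function, whose $H^{m+1}(K)$-norm on any compact $K$ is controlled by $\|\bv\|_{L^{2}}$. The main obstacle I expect is precisely this low-frequency behavior: the $|\xi|^{-1}$ singularity of the symbol (equivalently, the slow $1/|x|$ decay of $G$) is what forces the range to be $H^{m+1}_{\mathrm{loc}}$ rather than $H^{m+1}$, and is why the frequency split above is essential.
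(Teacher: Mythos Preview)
The paper does not actually supply a proof of Theorem~\ref{thmregular1}; it is quoted in the introduction as a known result (with references to Hiptmair and Amrouche et al.), and the paper's own contribution is the generalized Theorem~\ref{thmmain}, whose conclusion is different: there one seeks a divergence-free $\bw$ with $\curl\bw=\curl\bv$, not a vector potential with $\curl\bw=\bv$.

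Your proposal is correct. The Biot--Savart construction $L\bv=\curl(G*\bv)$ is the standard one, your verification of $\curl L\bv=\bv$ via $\curl\curl=\nabla\operatorname{div}-\Delta$ together with $\operatorname{div}\bv=0$ is right, and the high/low frequency split cleanly isolates the reason why the range is only $H^{m+1}_{\mathrm{loc}}$. The $L^{1}$ bound on $\widehat{L\bv_{\mathrm{lo}}}$ follows from Cauchy--Schwarz against $|\xi|^{-1}\in L^{2}(B_{1})$ in dimension three, and then every derivative of $L\bv_{\mathrm{lo}}$ is bounded, which is all you need; invoking Paley--Wiener is harmless but stronger than necessary.

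For comparison with the paper's technique in the proof of Theorem~\ref{thmmain}: both approaches are built on the Newtonian potential, but the paper handles the gain of one derivative by writing the target as a Newtonian-potential piece $\bw_{1}=\scurl(G*\curl\bv)$ plus a remainder $\bw-\bw_{1}$ that is shown to be harmonic via Weyl's lemma, and then combines Calder\'on--Zygmund with interior elliptic estimates (Lemma~\ref{InteriorEstimate}) for the norms. Your Fourier split is more direct and transparent when $p=2$, but it does not transfer to general $1<p<\infty$ without reworking it into a Calder\'on--Zygmund argument, which is precisely why the paper's route is organized around real-variable tools rather than Plancherel.
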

	This paper mainly extends Theorem \ref{thmregular1} to Theorem \ref{thmmain}, demonstrating the existence of a vector potential for functions in $W^{m,p}(\mathbb{R}^N,\mathbb{T})$ with $m \geq 0, 1< p < \infty$, and $N\geq 2$.
	This generalization is not straightforward. {One challenge} arises from the complexities introduced in defining operators in any $N$ dimensions. 
	For instance, the action of the $\curl$ operator on the vector will yield an anti-symmetric tensor, as defined in Definition \ref{curlh}. {Another challenge is that certain properties of Hilbert spaces when $p=2$ may not extend to Sobolev spaces with general values of $p$.} 
	
	Based on the results in Theorem \ref{thmmain}, a decomposition result of ${W}_{0}^{m,p}(\operatorname{curl}; \Omega,\mathbb{R}^N)$ in Theorem \ref{thmdecompostion} can be proved.  An essential step involves in the proof extending the function defined on $\Omega$ to $\mathbb{R}^N$ and then utilizing Theorem \ref{thmmain} to determine a vector potential. By following similarly procedures, this paper gives decomposition results for $L^{p}(\operatorname{curl}; \Omega,\mathbb{R}^N)$ and $W^{1,p}(\operatorname{curl}; \Omega,\mathbb{R}^3)$.
	\begin{theorem}
		\label{thmdecompostion}
		The following decomposition holds
		$$
		{W}_{0}^{m,p}(\operatorname{curl}; \Omega,\mathbb{R}^N)  = {W}_{0}^{m+1,p}(\Omega,\mathbb{R}^N) + \nabla(W^{m+1,p}_{0}(\Omega)), \quad m \geq 1.
		$$
		where $\Omega \subset \mathbb{R}^N$ is a {bounded Lipschitz} domain.
	\end{theorem}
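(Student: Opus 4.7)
The plan is to establish the non-trivial inclusion
$W_0^{m,p}(\curl;\Omega,\mathbb{R}^N) \subseteq W_0^{m+1,p}(\Omega,\mathbb{R}^N) + \nabla(W_0^{m+1,p}(\Omega))$
by extending $\bu$ to all of $\mathbb{R}^N$, invoking Theorem~\ref{thmmain} to produce a global vector potential, deriving a scalar potential via the Poincar\'e lemma on $\mathbb{R}^N$, and finally gauging both summands so that they become supported in $\overline{\Omega}$. The reverse inclusion is immediate since $\curl\nabla=0$ and $W_0^{m+1,p}(\Omega,\mathbb{R}^N)$ embeds continuously into $W_0^{m,p}(\curl;\Omega,\mathbb{R}^N)$.

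For the forward inclusion, given $\bu \in W_0^{m,p}(\curl;\Omega,\mathbb{R}^N)$, I would first extend both $\bu$ and $\curl\bu$ by zero to $\mathbb{R}^N$; the vanishing boundary conditions built into the definition of $W_0^{m,p}(\curl;\Omega,\mathbb{R}^N)$ should guarantee that $\tilde{\bu}\in W^{m,p}(\mathbb{R}^N,\mathbb{R}^N)$, that its distributional curl equals the zero-extension of $\curl\bu$, and that this extension lies in $W^{m,p}(\mathbb{R}^N,\mathbb{T})$. Being the curl of a vector field, it is divergence-free in the antisymmetric-tensor sense of Definition~\ref{curlh}, so Theorem~\ref{thmmain} applies and yields a potential $\bw\in W^{m+1,p}_{\mathrm{loc}}(\mathbb{R}^N,\mathbb{R}^N)$ with $\curl\bw = \curl\tilde{\bu}$.

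The curl-free vector field $\tilde{\bu}-\bw$ on the simply connected ambient $\mathbb{R}^N$ then admits, by the classical Poincar\'e lemma, a scalar $\varphi\in W^{m+1,p}_{\mathrm{loc}}(\mathbb{R}^N)$ with $\tilde{\bu} = \bw + \nabla\varphi$ globally. Since $\tilde{\bu}\equiv 0$ outside $\overline{\Omega}$, we have $\nabla\varphi = -\bw$ there; because $\bw\in W^{m+1,p}_{\mathrm{loc}}$, the restriction $\varphi|_{\mathbb{R}^N\setminus\overline{\Omega}}$ in fact belongs to $W^{m+2,p}_{\mathrm{loc}}$. Applying a Stein-type extension operator on this Lipschitz exterior produces $\tilde{\varphi}\in W^{m+2,p}_{\mathrm{loc}}(\mathbb{R}^N)$ agreeing with $\varphi$ outside $\Omega$. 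Setting $\bw' := \bw + \nabla\tilde{\varphi}$ and $\varphi' := \varphi - \tilde{\varphi}$ gives $\bw'$ and $\varphi'$ in the appropriate $W^{m+1,p}_{\mathrm{loc}}$ classes, both vanishing outside $\overline{\Omega}$, while $\tilde{\bu} = \bw' + \nabla\varphi'$ persists. The standard zero-extension characterization of $W_0^{k,p}(\Omega)$ for Lipschitz $\Omega$ then places $\bw'|_\Omega$ in $W_0^{m+1,p}(\Omega,\mathbb{R}^N)$ and $\varphi'|_\Omega$ in $W_0^{m+1,p}(\Omega)$, finishing the decomposition.

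The main obstacle is this final gauging step. Theorem~\ref{thmmain} produces a potential only modulo gradients, so the particular $\bw$ it delivers has no a priori boundary control on $\partial\Omega$. The gain comes from the observation that vanishing of $\tilde{\bu}$ outside $\Omega$ forces the exterior behaviour of $\bw$ to be a pure gradient with improved regularity, and it is precisely this excess regularity of $\varphi$ off $\Omega$ that lets the Stein-extended scalar $\tilde{\varphi}$ absorb the exterior part of $\bw$ without dropping out of $W^{m+1,p}$. A secondary technical point is verifying that the zero-extension step is legitimate, which requires that the relevant traces of $\bu$ and $\curl\bu$ vanish on $\partial\Omega$ to the order encoded by the $W_0^{m,p}(\curl)$ definition.
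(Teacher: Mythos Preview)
Your proposal is correct and follows essentially the same route as the paper: zero-extend $\bu$, apply Theorem~\ref{thmmain} to obtain $\bw$ with $\curl\bw=\curl\tilde{\bu}$, write $\tilde{\bu}=\bw+\nabla\varphi$ via the Poincar\'e lemma (the paper's Assumption~(A1)), exploit the extra regularity of $\varphi$ on $\Omega^c$ coming from $\nabla\varphi=-\bw$, Stein-extend $\varphi|_{\Omega^c}$ back to $\tilde{\varphi}\in W^{m+2,p}_{\mathrm{loc}}(\mathbb{R}^N)$, and regroup as $(\bw+\nabla\tilde{\varphi})+\nabla(\varphi-\tilde{\varphi})$. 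One small remark: Theorem~\ref{thmmain} takes the vector field $\tilde{\bu}$ itself as input and requires only compact support together with $\curl\tilde{\bu}\in W^{m,p}$, so your divergence-free justification for the tensor $\curl\tilde{\bu}$ is not needed in order to invoke it.
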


	Theorem \ref{thmdecompostion} not only clarifies the structure of ${W}_{0}^{m,p}(\operatorname{curl}; \Omega,\mathbb{R}^N)$ but also facilitates the proof of the trace theorem. Trace theory is instrumental in establishing the existence of weak solutions and regularity results in partial differential equations. However, most existing results focus on domains with smooth boundaries or domains that are polyhedral \cite{lamberti2020trace, gagliardo1957caratterizzazioni,burenkov1998sobolev}. This paper primarily addresses the trace space of $W^{m,p}(\Omega)$ for arbitrary $m$ and $N$ on Lipschitz domain $\Omega \subset \mathbb{R}^N$ in Theorem \ref{thmgeneraltrace}. In \cite[Theorem 7.8 and Corollary 7.11]{maz2010dirichlet}, this has been proved by using an analytical method based on Taylor expansions in Besov and weighted Sobolev spaces. This paper provides a new and unified proof. 
	
	The organization of this paper is as follows: Section 2 introduces notations, definitions, and preliminary lemmas. Section 3 details the construction of the generalized potential function and provides a rigorous proof for the simpler case outlined in Theorem \ref{thmdecompostion}. Section 4 examines the necessary and sufficient conditions for the trace theorem applicable to functions in $W^{m,p}(\Omega)$ and includes the proof for the specific case when $m=3$.

	\section{Notations And Preliminaries}
	
	In this section, the necessary notations and preliminary concepts are introduced to facilitate later discussions and analyses.
	
	\subsection{Notations}
	In this paper, unless otherwise specified, lowercase English letters and Greek letters denote scalars and scalar functions, respectively. Bold lowercase English letters denote vectors or vector functions, while bold uppercase English letters denote tensors or tensor functions of at least second order.
	
	For $1 \leq p<\infty$ and $m$ integer $W^{m, p}(D,X)$ denotes the Sobolev space of functions within domain $D$, taking values in space $X$, of $L^{p}(D,X)$ whose distributional derivatives up to the order $m$ also belong to $L^{p}(D,X)$. {Clearly, $W^{0,p}(D,X) = L^{p}(D,X)$.}
	{The corresponding norm and semi-norm are denoted as $\| \cdot \|_{W^{m, p}(D,X)}$ and $| \cdot |_{W^{m, p}(D,X)}$, respectively.}
	Similarly, let $C^m(D , X)$ denote the space of $m$-times continuously differentiable functions, taking values in $X$. {Let $C^{\infty}(D,X) = \bigcap_{m=0}^{\infty} C^{m}(D,X)$ and $C^{\infty}_0(D,X)$ consists of all those functions in $C^{\infty}(D,X)$ that has compact support in $D$.} Let {$W^{m,p}_0(D,X)$ be the closure of $C_{0}^{\infty}(D,X)$ in the space $W^{m, p}(D,X)$.}
	In this paper, $X$ could be $\mathbb{R}, \mathbb{R}^{N}$ or $\mathbb{T}$, where $\mathbb{T}$ denotes the set of all skew-symmetry $\mathbb{R}^{N\times N}$ matrices. If $X=\mathbb{R}$, then $W^{m,p}(D)$ abbreviates $W^{m,p}(D,X)$, similarly for $C^m(D)$. When $p=2$, $W^{m,2}(D,X)$ becomes Hilbert spaces and denoted as $H^{m}(D,X)$. Define 
	\begin{equation*}
		\label{defHspace}
		\begin{aligned}
			H(\operatorname{div}; \mathbb{R}^N, \mathbb{R}^N) & \coloneqq \{  \boldsymbol{v} \in L^2(\mathbb{R}^N,\mathbb{R}^N); \operatorname{div} \boldsymbol{v} \in L^2(\mathbb{R}^N) \}, \\
			H(\operatorname{div~0}; \mathbb{R}^N,\mathbb{R}^N ) &\coloneqq \{  \boldsymbol{v} \in H(\operatorname{div}; \mathbb{R}^N,\mathbb{R}^N); \operatorname{div} \boldsymbol{v} =0 \},
		\end{aligned}
	\end{equation*}
	which are used in Theorem \ref{thmregular1}.

	In this paper, let $\Omega \subset \mathbb{R}^N$ denote a {Lipschitz connected domain} as defined in Definition \ref{dfLipschitz}, with an {oriented compact} boundary denoted as $\Gamma$.
	
	\begin{df}[Lipschitz Domain] 
		\label{dfLipschitz}
		If for every $\bx \in \Gamma$, there exists a neighborhood $V(\bx)=\prod_{i=1}^{N} [a_{i}, b_{i} ] =V^{\prime}(\bx) \times [a_{N}, b_{N} ]$ of $\bx$ in $\mathbb{R}^{N}$ and a Lipschitz continuous function $\varphi: V^{\prime}(\bx) \rightarrow [ a_{N}, b_{N} ]$ such that\\
		(1) $\left|\varphi\left(y^{\prime}\right)\right| \leq \operatorname{max}(|a_N|,|b_N|) / 2$ for every $y^{\prime}=$ $\left(y_{1}, \ldots, y_{N-1}\right) \in V^{\prime}(\bx)$,\\
		(2) $\Omega \cap V(\bx)=\left\{\by=\left(y^{\prime}, y_{N}\right) \in V(\bx) \mid y_{N}\right.$ $\left.>\varphi\left(y^{\prime}\right)\right\}$, \\
		(3) $\Gamma \cap V(\bx)=\left\{\by=\left(y^{\prime}, y_{N}\right) \in V(\bx) \mid y_{N}=\varphi\left(y^{\prime}\right)\right\}$.\\
		Then the domain $\Omega \subset \mathbb{R}^N$ is called a Lipschitz domain.
	\end{df}
	
	There exist $s$ open and connected subsets $\Gamma_{i}$ such that $\Gamma=\bigcup_{i=1}^{s} \Gamma_{i}$. Within each subset $\Gamma_{i}$, there exist a point $o_{i} \in \Gamma_{i}$ and a Lipschitz continuous function $\varphi_{i}$ such that $\Gamma_{i}=\Gamma \cap V\left(o_{i}\right)$. The induced parametrization $\left(\varphi_{i}, \Gamma_{i}\right)$ of $\Gamma$ is defined for $i=1, \ldots, s$ by
	$y^{\prime}=\left(y_{1}, \ldots, y_{N-1}\right) \rightarrow \left(y^{\prime}, \varphi_{i}\left(y^{\prime}\right)\right)$. The unit outward normal vector $\boldsymbol{n} \in L^{\infty}(\Gamma,\mathbb{R}^N)$ and the {unit tangent vectors} $\boldsymbol{\tau}_{k} \in T_{x} \Gamma, k=1, \ldots, N-1$, can be defined a.e. on $\Gamma_{i}$ where $T_{x} \Gamma$ is the tangent space of the point $\bx \in \Gamma_i$. Therefore, the vectors $\left(\boldsymbol{\tau}_{1}, \ldots, \boldsymbol{\tau}_{N-1}, \boldsymbol{n}\right)$ are a positively oriented basis of $T_{x} \Omega$ for a.e. $\bx \in \Gamma$. 
	
	\subsection{Preliminaries} {This subsection extends the definition of the $\curl$ operator to higher dimensions and provides several identities and estimates of Newton potential functions for subsequent use.}
	
	The following definitions and lemmas are adapted from \cite{Oliver2023, arnold2018finite}.  
	
	\begin{df}[$\curl$ operator in $N$ dimensions]
		\label{curlh}
		Define operator $\curl: C^1\left(\Omega, \mathbb{R}^N\right) \rightarrow$ $C^0\left(\Omega, \mathbb{T}\right)$ for a vector function $\bft$ as anti-symmetric matrix function $\boldsymbol{A}$ with i-j-element defined as follows
		$$
		{A}_{ij} =[ \curl \bft]_{ij}:=\frac{1}{2} \left(\frac{\partial f_j}{\partial x_i}-\frac{\partial f_i}{\partial x_j} \right), \quad 1 \leq i, j \leq N. 
		$$
	\end{df}
	
	\begin{df}[$\scurl$ operator in $N$ dimensions]
		\label{surlh}
		Define operator $\scurl: C^1\left(\Omega, \mathbb{T}\right) \rightarrow$ $C^0\left(\Omega, \mathbb{R}^N \right)$ of an anti-symmetric matrix function $\boldsymbol{A}$ as vector function $\bft$ with i-element defined as follows
		$$
		f_i = [ \scurl \boldsymbol{A}]_{i}:= \sum_{j=1}^N 2 \frac{\partial A_{ij}}{\partial x_j}, \quad 1 \leq i \leq N. 
		$$
	\end{df}
	
	Unless otherwise specified, the Einstein summation convention is used in this paper.
	Define the trace operator $\gamma_t (\bft)$ on $\Gamma$ with $i$-$j$-element as $\frac{1}{2}(f_j n_i - f_i n_j)$. {Following Definition \ref{curlh}, introduce the spaces}
	\begin{eqnarray}
		\label{defspace}
		\begin{aligned}
			{L}^{p}(\operatorname{curl}; \Omega,\mathbb{R}^N) &\coloneqq \{ \bv \in {L}^{p}(\Omega,\mathbb{R}^N);  \curl \bv \in L^p(\Omega, \mathbb{T}) \}, \\
			{L}_{0}^{p}(\operatorname{curl}; \Omega,\mathbb{R}^N) &\coloneqq \{ \bv \in {L}^{p}(\curl; \Omega,\mathbb{R}^N);  \gamma_t(\bv)= 0  \text{ on } \Gamma \}, \\
			{W}_{0}^{m,p}(\operatorname{curl}; \Omega,\mathbb{R}^N) &\coloneqq \{ \bv \in {W}_{0}^{m,p}(\Omega,\mathbb{R}^N);  \operatorname{curl} \bv \in {W}_{0}^{m,p}(\Omega,\mathbb{T})  \} (m \geq 1).
		\end{aligned}
	\end{eqnarray}
	
	Given two vector functions $\boldsymbol{f}$ and $\boldsymbol{g}$, let
	\begin{equation*}
		\begin{aligned}
			(\boldsymbol{f}, \boldsymbol{g})_{\Omega} \coloneqq \int_{\Omega} \boldsymbol{f} \cdot \boldsymbol{g} \mathrm{d} \bx.
		\end{aligned}
	\end{equation*}
	Given two tensor functions $\boldsymbol{A}$ and $\boldsymbol{B}$, in order to avoid unnecessary coefficient in the following derivation, specially, let
	\begin{equation}
		\label{definner}
		(\boldsymbol{A},\boldsymbol{B})_{\Omega} \coloneqq 2\int_{\Omega} A_{ij}B_{ij} \mathrm{d} \bx.
	\end{equation}
	The subscript $\Omega$ will be omitted when there is no ambiguity.
	
	\begin{lemma}
		\label{lemmacurlcurh}
		The following equality holds
		$$
		(\curl \boldsymbol{g}, \boldsymbol{A}) = (\boldsymbol{g}, \scurl \boldsymbol{A}), \quad \forall~ \boldsymbol{g} \in C^1\left(\Omega, \mathbb{R}^N \right), \boldsymbol{A} \in C^{\infty}_0\left(\Omega, \mathbb{T}\right).
		$$
		
	\end{lemma}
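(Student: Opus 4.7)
My plan is a direct integration by parts in index notation, using the compact support of $\boldsymbol{A}$ to kill boundary terms and the skew-symmetry of $\boldsymbol{A}$ to absorb the antisymmetric part of $\nabla \boldsymbol{g}$. First I would unfold the left-hand side using Definition \ref{curlh} together with the tensor pairing (\ref{definner}). The factor $2$ built into (\ref{definner}) exactly cancels the $1/2$ in the definition of $[\curl \boldsymbol{g}]_{ij}$, giving
$$
(\curl \boldsymbol{g},\boldsymbol{A}) = \int_\Omega \bigl(\partial_i g_j - \partial_j g_i\bigr) A_{ij}\, d\bx.
$$

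Next I would exploit $A_{ji} = -A_{ij}$: relabeling $i \leftrightarrow j$ in the second term shows $\int_\Omega (\partial_j g_i) A_{ij}\, d\bx = -\int_\Omega (\partial_i g_j) A_{ij}\, d\bx$, so the two contributions add, yielding $2\int_\Omega (\partial_i g_j) A_{ij}\, d\bx$. Since $\boldsymbol{A} \in C^\infty_0(\Omega,\mathbb{T})$ has compact support in $\Omega$, a classical integration by parts in $x_i$ is justified without any boundary contribution and produces $-2\int_\Omega g_j \partial_i A_{ij}\, d\bx$.

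The final step is a second application of skew-symmetry: $\partial_i A_{ij} = -\partial_i A_{ji}$, after which renaming $i \leftrightarrow j$ turns the expression into $2\int_\Omega g_i \partial_j A_{ij}\, d\bx$. By Definition \ref{surlh} this is exactly $\int_\Omega g_i [\scurl \boldsymbol{A}]_i\, d\bx = (\boldsymbol{g},\scurl \boldsymbol{A})$, completing the identity.

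There is no real obstacle here; the only thing that needs care is the bookkeeping of the factor $2$ in the tensor inner product and the factor $2$ in the definition of $\scurl$, both of which are tuned precisely so that the duality pairing has no stray constants. The assumed regularity ($\boldsymbol{g}\in C^1$, $\boldsymbol{A}\in C^\infty_0$) is more than enough for classical integration by parts, so no distributional or trace-theoretic machinery is needed at this stage; this lemma is essentially a warm-up identity that will later be extended by density to the Sobolev setting relevant for Theorems \ref{thmmain} and \ref{thmdecompostion}.
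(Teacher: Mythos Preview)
Your proposal is correct and follows essentially the same approach as the paper: both arguments expand the left-hand side using Definition~\ref{curlh} and the pairing~\eqref{definner}, integrate by parts using the compact support of $\boldsymbol{A}$, and invoke the skew-symmetry $A_{ji}=-A_{ij}$ to arrive at $2(g_i,\partial_j A_{ij})$, which is $(\boldsymbol{g},\scurl\boldsymbol{A})$ by Definition~\ref{surlh}. The only cosmetic difference is the order of operations---the paper integrates by parts first and then applies skew-symmetry once, whereas you apply skew-symmetry before and after the integration by parts---but the content is identical.
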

	\begin{proof}
		{By Definition \ref{curlh} and \eqref{definner}, one can derive that}
		$$
		\begin{aligned}
			(\curl \boldsymbol{g}, \boldsymbol{A}) &= 2\int_{\Omega}\frac{1}{2}  \left(\frac{\partial g_j}{\partial x_i}-\frac{\partial g_i}{\partial x_j} \right) A_{ij} \mathrm{d} \bx \\
			& = - \left( g_j ,\frac{\partial A_{ij}}{\partial x_i} \right) + \left( g_i, \frac{\partial A_{ij}}{\partial x_j} \right) \\
			& = -\left( g_i ,\frac{\partial A_{ji}}{\partial x_j} \right) + \left( g_i, \frac{\partial A_{ij}}{\partial x_j} \right) = 2\left( g_i , \frac{\partial A_{ij}}{\partial x_j} \right).
		\end{aligned}
		$$
		The proof is completed by noting Definition \ref{surlh}.
	\end{proof}

	\begin{lemma}
		\label{lemmalaplace1}
		For any twice continuously differentiable vector field $\bft \in C^2\left(\Omega, \mathbb{R}^N \right)$, the following identity holds
		\begin{equation*}
			-\Delta \bft = -\operatorname{grad} \operatorname{div} \bft+ \scurl \curl \bft.
		\end{equation*}
	\end{lemma}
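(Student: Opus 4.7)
The plan is to verify the identity componentwise by directly unwinding the two definitions. Since both sides are vector fields, it suffices to compare the $i$-th component of each side for an arbitrary index $1\leq i\leq N$, and since $\bft\in C^2$ Schwarz's theorem on equality of mixed partials is available throughout.

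First I would compute $[\scurl \curl \bft]_i$ by substitution. Applying Definition \ref{surlh} to the anti-symmetric tensor $\curl \bft$ gives
\begin{equation*}
[\scurl \curl \bft]_i \;=\; 2\sum_{j=1}^N \frac{\partial [\curl \bft]_{ij}}{\partial x_j},
\end{equation*}
and then inserting Definition \ref{curlh} yields
\begin{equation*}
[\scurl \curl \bft]_i \;=\; 2\sum_{j=1}^N \frac{\partial}{\partial x_j}\!\left(\frac{1}{2}\!\left(\frac{\partial f_j}{\partial x_i}-\frac{\partial f_i}{\partial x_j}\right)\right) \;=\; \sum_{j=1}^N\!\left(\frac{\partial^2 f_j}{\partial x_j\,\partial x_i}-\frac{\partial^2 f_i}{\partial x_j^2}\right).
\end{equation*}
The factor of $2$ from $\scurl$ cancels the factor of $\tfrac{1}{2}$ from $\curl$, which is precisely why the normalization in Definitions \ref{curlh} and \ref{surlh} was chosen this way.

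Next I would recognize the two sums. By $C^2$ regularity one may swap partials in the first term, giving $\partial_i(\sum_j \partial_j f_j) = \partial_i(\operatorname{div} \bft) = [\operatorname{grad}\operatorname{div}\bft]_i$, while the second term is $\sum_j \partial_j^2 f_i = [\Delta \bft]_i$. Hence $[\scurl\curl\bft]_i = [\operatorname{grad}\operatorname{div}\bft]_i - [\Delta\bft]_i$, which rearranges to the claimed identity.

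There is really no obstacle here; the only subtlety is bookkeeping of the factors $2$ and $\tfrac12$ introduced in the definitions of $\curl$ and $\scurl$, and making sure Schwarz's theorem is invoked with the correct regularity hypothesis $\bft\in C^2$. The entire argument is a one-line computation once the definitions are expanded.
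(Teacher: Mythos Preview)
Your proof is correct and follows essentially the same route as the paper: a componentwise expansion of the definitions of $\curl$ and $\scurl$, cancellation of the factors $2$ and $\tfrac12$, and an appeal to the symmetry of second derivatives to identify the remaining terms. The only cosmetic difference is that the paper computes $[-\operatorname{grad}\operatorname{div}\bft + \scurl\curl\bft]_i$ in one line and watches the mixed-partial terms cancel, whereas you isolate $[\scurl\curl\bft]_i$ first and then split it into $[\operatorname{grad}\operatorname{div}\bft]_i - [\Delta\bft]_i$.
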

	\begin{proof}
		{It follows from Definition \ref{curlh} and \ref{surlh} that } 
		$$
		\begin{aligned}
			&\quad [ -\operatorname{grad} \operatorname{div} \bft + \scurl \curl \bft  ]_i \\
			& = -\frac{\partial}{\partial x_i}  \frac{\partial f_j}{\partial x_j}+ 2\cdot \frac{1}{2} \frac{\partial}{\partial x_j}\left(\frac{\partial f_j}{\partial x_i}-\frac{\partial f_i}{\partial x_j}\right) \\
			& = - \frac{\partial}{\partial x_j}\frac{\partial f_i}{\partial x_j}.
		\end{aligned}
		$$
		{The cancellation of the first two terms occurs due to the symmetry of second derivatives and the interchange of the sum and derivative.}
	\end{proof}
	
	{This paper will frequently employ solutions of the Laplace equation to construct potential functions satisfying the requirements. Consequently, the definition of the kernel function in $N$-dimensional space is presented.}
	
	\begin{df}[Kernel function in $N$ dimensions] 
		\label{kernel}	
		Define the fundamental solutions of the Laplace equation as follows 
		\begin{equation}
			\label{defkernel}
			\lambda(\bx-\by)= \begin{cases}-\frac{1}{2 \pi}(\log |\bx-\by|) & N=2, \\ \frac{1}{N(N-2) V_N}\left(|\bx-\by|^{2-N}\right) & \text { otherwise, }\end{cases}
		\end{equation}
		where $V_N$ denotes the volume of a unit $N$-ball.
	\end{df}
	
	For function $\phi(\bx)$ with compact support, define the Newton potential function using the kernel function in \eqref{defkernel} as
	\begin{equation*}
		\varphi(\boldsymbol{x})=\int_{\mathbb{R}^N} \phi(\boldsymbol{y}) \lambda(\boldsymbol{x}- \boldsymbol{y}) d \boldsymbol{y}.
	\end{equation*}
	It is easy to verify that $-\triangle \varphi(\bx) = \phi(\bx)$ and the following estimates hold.
	
	\begin{lemma}[{\cite[Theorem 9.11]{gilbarg1977elliptic}}]
		\label{InteriorEstimate}
		Let $\tilde{\Omega}$ be an open set in $\mathbb{R}^N$ and $\psi \in W^{2,p}_{\mathrm{loc}}(\tilde{\Omega}) \cap L^p(\tilde{\Omega}),1<p<\infty$, be a strong solution of the equation $-\Delta \varphi = \phi$ in $\tilde{\Omega}$. Then for any domain $\Omega \subset \subset \tilde{\Omega}$, where $\Omega \subset \subset \tilde{\Omega}$ if the closure of $\Omega$ denoted as $\bar{\Omega} \subset \tilde{\Omega}$ and $\bar{\Omega}$ is a compact subset of $\mathbb{R}^N$, it holds that
		$$
		\| \varphi \|_{W^{2,p}(\Omega)} \leq  C(N,p,\Omega,\tilde{\Omega}) \left(\| \varphi \|_{L^p(\tilde{\Omega})} + \| \phi \|_{L^p(\tilde{\Omega})} \right).
		$$
	\end{lemma}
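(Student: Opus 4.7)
The plan is to obtain the interior $W^{2,p}$ estimate by the standard two-stage strategy: a global estimate in $\mathbb{R}^N$ via singular-integral theory applied to the Newton potential, followed by a localization argument using a cutoff function.

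First, I would establish the global bound $\|D^2 \varphi\|_{L^p(\mathbb{R}^N)} \leq C(N,p) \|\phi\|_{L^p(\mathbb{R}^N)}$ whenever $-\Delta\varphi = \phi$ with $\phi \in C_0^{\infty}(\mathbb{R}^N)$. Using the Newton potential $\varphi(\bx) = \int \lambda(\bx-\by)\phi(\by)\,d\by$ from Definition \ref{kernel}, differentiating twice produces
\[
\partial_i \partial_j \varphi(\bx) = \mathrm{p.v.}\!\int_{\mathbb{R}^N} K_{ij}(\bx-\by)\phi(\by)\,d\by + c_{ij}\,\phi(\bx),
\]
where $K_{ij}(\bz) = \partial_i\partial_j \lambda(\bz)$ is homogeneous of degree $-N$, smooth off the origin, and has mean zero on spheres. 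This is precisely the setup of the Calderón--Zygmund theorem, which yields the $L^p$ bound for $1<p<\infty$ (via a weak-$(1,1)$ estimate through the Calderón--Zygmund decomposition plus Marcinkiewicz interpolation, with duality handling $p>2$).

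Next, I would localize. Pick intermediate domains $\Omega \subset\subset \Omega' \subset\subset \Omega'' \subset\subset \tilde{\Omega}$ and a cutoff $\eta \in C_0^\infty(\Omega')$ with $\eta \equiv 1$ on $\Omega$. Set $\tilde{\varphi} = \eta \varphi$, extended by zero to $\mathbb{R}^N$. A direct calculation gives
\[
-\Delta \tilde{\varphi} = \eta \phi - 2\,\nabla \eta \cdot \nabla \varphi - \varphi\, \Delta \eta \quad \text{in } \mathbb{R}^N.
\]
Since $\phi \in L^p(\tilde{\Omega})$ and $\varphi \in W^{2,p}_{\mathrm{loc}}(\tilde{\Omega})$, the right-hand side lies in $L^p(\mathbb{R}^N)$, and the global estimate applied to $\tilde{\varphi}$ yields
\[
\|D^2 \varphi\|_{L^p(\Omega)} \leq C\bigl(\|\phi\|_{L^p(\tilde{\Omega})} + \|\nabla \varphi\|_{L^p(\Omega')} + \|\varphi\|_{L^p(\Omega')}\bigr),
\]
where the constant depends on $N$, $p$, and the cutoff, hence on $\Omega,\tilde{\Omega}$.

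Finally, the first-derivative term has to be absorbed. Here I would apply the interpolation inequality $\|\nabla \varphi\|_{L^p(\Omega')} \leq \varepsilon \|D^2 \varphi\|_{L^p(\Omega'')} + C_\varepsilon \|\varphi\|_{L^p(\Omega'')}$, and iterate the cutoff construction over a finite chain of nested subdomains so that the $\varepsilon\|D^2\varphi\|$ term on the slightly larger set can be absorbed into the left-hand side. The main obstacle is the Calderón--Zygmund inequality in the first step: verifying the cancellation and size conditions of $K_{ij}$ rigorously and carrying out the weak-$(1,1)$ estimate is the technical core of the argument, whereas the cutoff/interpolation steps are routine once the global bound is in hand.
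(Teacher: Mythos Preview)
The paper does not prove this lemma; it is quoted directly from \cite[Theorem 9.11]{gilbarg1977elliptic} and used as a black box. Your outline is correct and is essentially the standard Gilbarg--Trudinger argument: Calder\'on--Zygmund theory for the Newton potential to get the global $L^p$ bound on $D^2\varphi$, followed by localization with a cutoff and absorption of the intermediate $\|\nabla\varphi\|_{L^p}$ term via interpolation. The only place to be careful is the absorption step: iterating over a finite chain of nested subdomains does not by itself close the estimate, since at each stage the $D^2$ term reappears on a strictly larger set; the clean way (as in Gilbarg--Trudinger) is to work with weighted interior seminorms of the form $\sup_{\Omega'\subset\subset\tilde\Omega} d(\Omega',\partial\tilde\Omega)^2\|D^2\varphi\|_{L^p(\Omega')}$ and use an interpolation inequality with the matching weights, so that the small constant in front of the top-order term is uniform and can be absorbed in one stroke.
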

	
	\begin{lemma}[Calderon-Zygmund inequality\cite{Calderon1952}]
		\label{CZ}
		Let $\phi \in L^p(\mathbb{R}^N), 1<p<\infty$, and let $\varphi$ be the Newtonian potential of $\phi$. Then there exists a constant $C(N,p)$ such that $$\quad\left\| \partial_{ij}^2 \varphi \right\|_{L^p({\mathbb{R}^N})} \leq C(N,p) \|\phi \|_{L^p({\mathbb{R}^N})}, \quad i,j=1,2,\cdots,N.$$
	\end{lemma}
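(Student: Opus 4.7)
The plan is to follow the classical Calderón–Zygmund program, proving the estimate first on the dense subclass $\phi \in C_0^\infty(\mathbb{R}^N)$ and then extending by density. The starting point is an explicit representation of $\partial_{ij}^2 \varphi$ as a singular integral. Since $\varphi = \lambda * \phi$ with $\lambda$ the fundamental solution of $-\Delta$, one computes that away from the origin
\begin{equation*}
\partial_{ij}^2 \lambda(\bx) \;=\; \frac{N\, x_i x_j - \delta_{ij}|\bx|^2}{N V_N\, |\bx|^{N+2}},
\end{equation*}
which is a kernel of the form $\Omega(\bx/|\bx|)/|\bx|^N$ with $\Omega$ smooth on $S^{N-1}$ and having vanishing mean on the sphere. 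The first step is to justify, by excising a small ball $B_\varepsilon(\bx)$, differentiating under the integral, and integrating by parts on the boundary, the identity
\begin{equation*}
\partial_{ij}^2 \varphi(\bx) \;=\; \text{p.v.}\!\int_{\mathbb{R}^N} \partial_{ij}^2 \lambda(\bx-\by)\, \phi(\by)\, d\by \;+\; c_{ij}\,\phi(\bx),
\end{equation*}
with $c_{ij}$ an explicit constant (arising from the boundary term on the small sphere).

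Next I would prove $L^2$ boundedness of the singular integral operator $T_{ij}$ by Fourier analysis: formally $\widehat{\partial_{ij}^2\varphi}(\xi) = -\xi_i\xi_j |\xi|^{-2}\hat\phi(\xi)$, and the multiplier $\xi_i\xi_j/|\xi|^2$ is bounded by $1$, so Plancherel gives $\|T_{ij}\phi\|_{L^2}\le \|\phi\|_{L^2}$. Then I would establish the weak-type $(1,1)$ bound via the Calderón–Zygmund decomposition of $\phi$ at height $\alpha$: split $\phi = g + b$ where $g$ is bounded by $c\alpha$ and $b = \sum_Q b_Q$ with each $b_Q$ supported in a dyadic cube and of mean zero. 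The $L^2$ bound controls the level set of $Tg$; for $Tb$, the smoothness estimate $|\nabla K_{ij}(\bx)|\le C|\bx|^{-N-1}$ (a standard Hörmander condition) together with the mean-zero property of $b_Q$ gives the desired weak bound off the dilated cubes. Marcinkiewicz interpolation then yields strong type $(p,p)$ for $1<p\le 2$, and the case $2\le p<\infty$ follows by duality since the adjoint kernel is of the same form.

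The main obstacle is the careful treatment of the principal value: one must verify rigorously that the truncated integrals converge, identify the correct local constant $c_{ij}$ coming from integration over the vanishing sphere, and check that the kernel $\partial_{ij}^2 \lambda$ genuinely satisfies both the size bound $|K_{ij}(\bx)|\le C|\bx|^{-N}$, the cancellation condition $\int_{|\bx|=r} K_{ij}\, dS = 0$, and the Hörmander smoothness estimate. Once these properties are in hand, the rest is a routine application of Calderón–Zygmund theory; finally, density of $C_0^\infty(\mathbb{R}^N)$ in $L^p(\mathbb{R}^N)$ for $1<p<\infty$ transfers the estimate to every $\phi \in L^p(\mathbb{R}^N)$, completing the proof with constant $C(N,p)$ depending only on dimension and integrability exponent.
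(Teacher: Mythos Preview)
Your proposal outlines the classical Calder\'on--Zygmund argument and is correct. Note, however, that the paper does not supply its own proof of this lemma: it is stated with a citation to the original Calder\'on--Zygmund paper and used as a black box in the subsequent sections. So there is no ``paper's proof'' to compare against; your sketch is precisely the standard route one would take if asked to supply the details, and nothing more elaborate is required here.
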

	
	\begin{lemma}
		\label{Wu}
		{Let }$\phi \in L^p(\mathbb{R}^N), 1<p<\infty$, and let $\varphi$ be the Newtonian potential of $\phi$. Then {for any bounded domain $\Omega$}, there exists a constant $C(N,p,\Omega)$ such that $$\quad\left\| \varphi \right\|_{W^{1,p}(\Omega)} \leq C(N,p,\Omega) \|\phi \|_{L^p({\mathbb{R}^N})}.$$
	\end{lemma}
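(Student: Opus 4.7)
The plan is to split the Newton-potential kernel $\lambda$ into a singular piece supported near the origin and a smooth far-field tail, and to estimate the two resulting convolutions separately via Young's and H\"older's inequalities. The idea is that $\lambda$ is locally $L^1$, so its short-range part gives a bounded convolution operator on $L^p$, while at long range $\lambda$ and $\nabla\lambda$ are smooth and can be paired with $\phi$ pointwise.

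Concretely, since $\Omega$ is bounded, I would fix $R>0$ with $\Omega\subset B(0,R)$ and a cutoff $\eta\in C_{0}^{\infty}(\mathbb{R}^{N})$ satisfying $\eta\equiv 1$ on $B(0,2R)$ and $\eta\equiv 0$ outside $B(0,3R)$. Write $\lambda=\lambda_1+\lambda_2$ with $\lambda_1=\eta\lambda$ and $\lambda_2=(1-\eta)\lambda$, so that $\varphi=\lambda_1*\phi+\lambda_2*\phi$. Using the size bounds $|\lambda(\boldsymbol{z})|\lesssim|\boldsymbol{z}|^{2-N}+|\log|\boldsymbol{z}||$ and $|\nabla\lambda(\boldsymbol{z})|\lesssim|\boldsymbol{z}|^{1-N}$, together with the compactness of $\mathrm{supp}(\lambda_1)$, both $\lambda_1$ and $\nabla\lambda_1$ lie in $L^1(\mathbb{R}^N)$ (the polar factor $r^{N-1}$ absorbs the singularity at the origin). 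Young's convolution inequality then gives
\begin{equation*}
\|\lambda_1*\phi\|_{W^{1,p}(\mathbb{R}^N)}\leq\bigl(\|\lambda_1\|_{L^1}+\|\nabla\lambda_1\|_{L^1}\bigr)\|\phi\|_{L^p(\mathbb{R}^N)},
\end{equation*}
which restricts to $\Omega$ with the same bound. For the far-field piece, if $\bx\in\Omega$ then $\lambda_2(\bx-\by)\neq 0$ forces $|\bx-\by|\geq 2R$, a region on which $\lambda$ and $\nabla\lambda$ are smooth; H\"older's inequality then yields a pointwise bound $|(\lambda_2*\phi)(\bx)|+|\nabla(\lambda_2*\phi)(\bx)|\leq C(N,p,\Omega)\|\phi\|_{L^p(\mathbb{R}^N)}$ uniformly in $\bx\in\Omega$. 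Integrating over the bounded set $\Omega$ and combining with the local estimate produces the claim.

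The main obstacle is the far-field integrability of the kernel: $|\lambda(\boldsymbol{z})|\lesssim|\boldsymbol{z}|^{2-N}$ lies in $L^{p'}(\mathbb{R}^N\setminus B(0,2R))$ only when $p<N/2$, and the kernel $\log|\boldsymbol{z}|$ in dimension $N=2$ even grows at infinity. In the applications used in the sequel $\phi$ has effective support bounded by the geometry of $\Omega$, which restricts the outer integration in the far-field step to a bounded region and lets H\"older close with a constant depending only on $N$, $p$, and $\Omega$; otherwise the Calder\'on--Zygmund estimate (Lemma~\ref{CZ}) together with the interior estimate (Lemma~\ref{InteriorEstimate}) can be invoked to upgrade a weaker $L^q$ control on $\varphi$ and $\nabla\varphi$ (available via Hardy--Littlewood--Sobolev for Riesz potentials) to the required $W^{1,p}(\Omega)$ bound.
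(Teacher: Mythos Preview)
The paper states this lemma without proof, so there is no argument to compare against. Your near-field treatment via Young's inequality (using $\lambda_1,\nabla\lambda_1\in L^1(\mathbb{R}^N)$) is correct and standard.

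Your diagnosis of the far-field obstacle is not merely a technical nuisance: the lemma as printed is in fact false. For $N\geq 3$ and $N/2\leq p<\infty$ (and for $N=2$ and any $p$) one can produce $\phi\in L^p(\mathbb{R}^N)$ whose Newton potential is not locally finite. For instance, with $N=3$, $p=2$, take $\phi(\by)=|\by|^{-\alpha}\mathbf{1}_{\{|\by|>1\}}$ with $3/2<\alpha\leq 2$; then $\phi\in L^2(\mathbb{R}^3)$ but $\varphi(\bx)=c\int_{|\by|>1}|\bx-\by|^{-1}|\by|^{-\alpha}\,d\by=+\infty$ for every $\bx$. Your suggested fallback through Hardy--Littlewood--Sobolev inherits the same restriction $p<N/2$ on the zeroth-order term, and Lemma~\ref{InteriorEstimate} already presupposes $\varphi\in L^p_{\mathrm{loc}}$, which is exactly what fails here, so that route does not close the gap either.

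What rescues the paper is precisely what you observe: in every invocation of Lemma~\ref{Wu} (inside the proof of Theorem~\ref{thmmain}) the source is built from $\bv$ or $\operatorname{curl}\bv$, and $\bv$ is explicitly assumed to have compact support. With $\mathrm{supp}\,\phi\subset B(0,R_0)$ and $\Omega\subset B(0,R)$, your far-field term $\lambda_2*\phi$ is an integral over the bounded region $\{\by:|\bx-\by|>2R\}\cap B(0,R_0)$ against a smooth kernel bounded on that set, and the H\"older step closes with a constant depending on $N,p,R,R_0$. So your argument is the right one; it simply proves the corrected statement with the additional hypothesis that $\phi$ has compact support, which is all the paper ever uses.
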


	\section{Generalized Vector Potential}
	\label{sec3}
	
	This section first offers a detailed proof of the generalized vector potential theorem. Subsequently, it presents several decomposition theorems derived from this principal theorem. The subsequent assumptions and theorems are are necessary for the proofs presented later.
	
	\newenvironment{assumption}[1]{\medskip\noindent{\textbf{Assumption} (#1):}\rmfamily\label{#1}}{\par\medskip}
	\newcommand{\cA}[1]{(\ref{#1})}
	\begin{assumption}{A1}
		In this paper, it is consistently assumed that if $\curl \bv =0$ in $\Omega$, then there exists a function $\eta$ such that $\bv = \nabla \eta$.
	\end{assumption}

	\begin{theorem}[Weyl's lemma{\cite[corollary 1.2.1]{jost2012partial}}]
		\label{weyl}
		Let $\varphi: \Omega \rightarrow \mathbb{R}$ be measurable and locally integrable in $\Omega$. Suppose that for all $\psi \in C_0^{\infty}(\Omega)$,
		$$
		\int_{\Omega} \varphi(x) \Delta \psi(x) d x=0 .
		$$
		Then $\varphi$ is harmonic and, in particular, smooth.
	\end{theorem}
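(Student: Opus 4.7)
The plan is to apply mollification. Fix a nonnegative, radially symmetric mollifier $\eta \in C_{0}^{\infty}(\mathbb{R}^{N})$ with $\mathrm{supp}\,\eta \subset B_{1}(0)$ and $\int \eta = 1$, and set $\eta_{\varepsilon}(x) = \varepsilon^{-N}\eta(x/\varepsilon)$. For $x \in \Omega_{\varepsilon} := \{ x \in \Omega : \mathrm{dist}(x,\partial \Omega) > \varepsilon \}$, define
$$
\varphi_{\varepsilon}(x) = (\varphi * \eta_{\varepsilon})(x) = \int_{\mathbb{R}^{N}} \varphi(y)\, \eta_{\varepsilon}(x-y)\, dy.
$$
Standard properties of mollification give $\varphi_{\varepsilon} \in C^{\infty}(\Omega_{\varepsilon})$ together with $\varphi_{\varepsilon} \to \varphi$ in $L^{1}_{\mathrm{loc}}(\Omega)$ as $\varepsilon \to 0^{+}$; both facts will be used at the end.

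The first substantive step is to show that each $\varphi_{\varepsilon}$ is \emph{classically} harmonic on $\Omega_{\varepsilon}$. For any $x \in \Omega_{\varepsilon}$ the map $y \mapsto \eta_{\varepsilon}(x-y)$ lies in $C_{0}^{\infty}(\Omega)$, so applying the hypothesis with $\psi(y) := \eta_{\varepsilon}(x-y)$ and using $\Delta_{x} \eta_{\varepsilon}(x-y) = \Delta_{y} \eta_{\varepsilon}(x-y)$ (the Laplacian is of even order, so the sign flip cancels) yields
$$
\Delta \varphi_{\varepsilon}(x) = \int \varphi(y)\, \Delta_{x} \eta_{\varepsilon}(x-y)\, dy = \int \varphi(y)\, \Delta_{y} \eta_{\varepsilon}(x-y)\, dy = 0.
$$

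The main step is to collapse the family $\{\varphi_{\varepsilon}\}$ to a single smooth harmonic representative. I claim that $\varphi_{\varepsilon}(x) = \varphi_{\delta}(x)$ whenever $x$ lies deep enough in $\Omega$ that both convolutions are defined and the arguments below stay inside $\Omega$. Indeed, by commutativity and associativity of convolution,
$$
\varphi_{\varepsilon} * \eta_{\delta} = \varphi * (\eta_{\varepsilon} * \eta_{\delta}) = \varphi_{\delta} * \eta_{\varepsilon}.
$$
Since $\varphi_{\varepsilon}$ is a smooth harmonic function on a neighbourhood of $x$ and $\eta_{\delta}$ is radially symmetric with unit mass, the mean value property gives $(\varphi_{\varepsilon} * \eta_{\delta})(x) = \varphi_{\varepsilon}(x)$; the symmetric argument yields $(\varphi_{\delta} * \eta_{\varepsilon})(x) = \varphi_{\delta}(x)$. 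Hence the family is eventually constant in $\varepsilon$, so there is a well-defined $\tilde{\varphi} \in C^{\infty}(\Omega)$, harmonic on $\Omega$, with $\tilde{\varphi}|_{\Omega_{\varepsilon}} = \varphi_{\varepsilon}$ for all sufficiently small $\varepsilon$. Comparing with $\varphi_{\varepsilon} \to \varphi$ in $L^{1}_{\mathrm{loc}}(\Omega)$ forces $\tilde{\varphi} = \varphi$ almost everywhere, proving the claim.

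The main obstacle, in my view, is the bookkeeping in the collapsing step: one must choose $\delta$ small enough relative to $\varepsilon$ and to $\mathrm{dist}(x,\partial\Omega)$ to guarantee that the ball on which the mean value property is invoked, together with the supports of the composed mollifiers, all lie inside the region where harmonicity of $\varphi_{\varepsilon}$ and $\varphi_{\delta}$ has been established. Once these domain inclusions are controlled, everything else is routine: convolution properties, the mean value property of harmonic functions, and $L^{1}_{\mathrm{loc}}$-convergence of mollified approximations.
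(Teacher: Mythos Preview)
The paper does not give its own proof of this statement: Weyl's lemma is quoted as Theorem~\ref{weyl} with a citation to \cite[Corollary~1.2.1]{jost2012partial} and is used as a black box in Step~3 of the proof of Theorem~\ref{thmmain}. There is therefore nothing in the paper to compare your argument against.

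That said, your proposal is correct and is the standard mollification proof of Weyl's lemma (and is essentially the argument in Jost). The three ingredients---(i) $\varphi_{\varepsilon}$ is classically harmonic on $\Omega_{\varepsilon}$ because $y\mapsto\eta_{\varepsilon}(x-y)$ is an admissible test function, (ii) radial mollification fixes harmonic functions via the mean value property, so $\varphi_{\varepsilon}=\varphi_{\delta}$ on $\Omega_{\varepsilon+\delta}$, and (iii) $L^{1}_{\mathrm{loc}}$ convergence of mollifications identifies the stabilized limit with $\varphi$---are exactly what is needed. Your remark about the bookkeeping is accurate: the only care required is that $\mathrm{dist}(x,\partial\Omega)>\varepsilon+\delta$ so that $B_{\delta}(x)\subset\Omega_{\varepsilon}$ and $B_{\varepsilon}(x)\subset\Omega_{\delta}$, which is routine.
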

	
	\begin{theorem}[The Stein extension theorem {\cite[Theorem 5.24]{adams2003sobolev}}] 
		\label{thmextension}
		Let $\Omega \subset \mathbb{R}^N$ be a bounded Lipschitz domain and $m \geq 1$ be an integer, $1 \leq p \leq \infty$. Then $\Omega$ has a bounded extension operator $E: W^{m,p}(\Omega) \rightarrow W^{m,p}(\mathbb{R}^N)$,i.e., for $\phi \in W^{m,p}(\Omega)$, there exists a constant $C$ such that 
		$$
		E\phi(\bx)=\phi(\bx) \text{ a.e. in } \Omega, \quad \text{and} \quad \| E\phi \|_{W^{m,p}(\mathbb{R}^N)} \leq C(m,p) \| \phi\|_{W^{m,p}(\Omega)}.
		$$
	\end{theorem}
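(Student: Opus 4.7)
The plan is to prove Stein's extension theorem by localizing near the boundary, reducing to the case of a Lipschitz graph, and constructing a single extension operator that works uniformly in $m$ and $p$. First I would cover the compact boundary $\Gamma$ by the finitely many neighborhoods $V(o_i)$ furnished by Definition \ref{dfLipschitz}, together with an open set $V_0 \subset\subset \Omega$ covering the rest of $\Omega$, and take a smooth partition of unity $\{\chi_i\}$ subordinate to this cover. Each piece $\chi_i\phi$ supported in $V_0$ is extended trivially by zero after multiplication by a cutoff, so the problem reduces to extending a function supported in some $V(o_i)$ where, after a rigid motion, $\Omega \cap V(o_i)$ is the epigraph $\{y_N > \varphi_i(y')\}$ of a Lipschitz function.

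For this half-space-like configuration I would employ Stein's averaged reflection. The ingredients are a regularized distance $\rho \in C^\infty(\mathbb{R}^N \setminus \bar\Omega)$ with $c_1 \operatorname{dist}(\bx,\Gamma) \le \rho(\bx) \le c_2 \operatorname{dist}(\bx,\Gamma)$ and $|\partial^\alpha \rho(\bx)| \le C_\alpha \operatorname{dist}(\bx,\Gamma)^{1-|\alpha|}$ (constructed from a Whitney decomposition of the complement of $\bar\Omega$), and a fixed function $\psi$ on $[1,\infty)$ with $\int_1^\infty \psi\,d\lambda = 1$, $\int_1^\infty \lambda^k \psi(\lambda)\,d\lambda = 0$ for every $k \ge 1$, and rapid decay at infinity (obtainable through an inverse Mellin transform). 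For $\bx \notin \bar\Omega$ lying in $V(o_i)$, set
\begin{equation*}
E\phi(\bx) \;=\; \int_{1}^{\infty} \phi\bigl(x', \, 2\varphi_i(x') - x_N + \lambda\,\rho(\bx)\bigr)\, \psi(\lambda)\, d\lambda,
\end{equation*}
which reflects $\bx$ across $\Gamma$, pushes it into $\Omega$ by a variable depth $\lambda\rho(\bx)$, and averages against $\psi$. Since $\rho(\bx) \to 0$ as $\bx \to \Gamma$, the traces of $E\phi$ and $\phi$ match on $\Gamma$, so $E\phi$ is a genuine extension.

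The boundedness verification proceeds by differentiating $E\phi$ through the chain rule; the resulting integrands are products of derivatives $\partial^\beta \phi$ composed with the reflected argument, times factors involving $\partial^\gamma \rho$ and at most one derivative of $\varphi_i$. The pointwise bound $|\partial^\alpha \rho| \lesssim \rho^{1-|\alpha|}$ yields factors of $\lambda^{|\alpha|-1}$ after the substitution $t = \lambda \rho(\bx)$, and the vanishing moments of $\psi$ are precisely what cancels the singular contributions so that Fubini and Minkowski's integral inequality deliver $\|\partial^\alpha E\phi\|_{L^p} \le C(m,p,\Omega)\|\phi\|_{W^{m,p}(\Omega)}$. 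Gluing the local extensions back together with $\{\chi_i\}$ gives a global operator with the required norm estimate.

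The main obstacle is that a Lipschitz $\varphi_i$ offers only one almost-everywhere bounded derivative, yet the theorem must hold for arbitrarily large $m$. This rules out Hestenes-type higher-order reflections, which need $\partial\Omega \in C^m$. The conceptual heart of the argument is therefore the single averaging kernel $\psi$ with \emph{infinitely} many vanishing moments: it absorbs all powers of $\rho$ that arise when one differentiates $E\phi$ many times, so that exactly one derivative ever falls on $\varphi_i$ regardless of $|\alpha|$. Designing $\psi$ with the required moment and decay properties, and tracking the combinatorics of the chain-rule expansion so that each term is genuinely controlled, is the technically delicate step of the proof.
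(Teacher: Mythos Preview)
The paper does not prove this theorem; it merely quotes it with a citation to Adams, so there is nothing to compare your argument against on the paper's side.

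Your overall architecture---localization via a partition of unity, reduction to a Lipschitz epigraph, regularized distance $\rho$, and an averaging kernel $\psi$ with all moments $\int_1^\infty \lambda^k\psi\,d\lambda$ vanishing for $k\ge 1$---is exactly Stein's strategy and is correct in outline.

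There is, however, a genuine error in your explicit extension formula. You write
\[
E\phi(\bx)=\int_1^\infty \phi\bigl(x',\,2\varphi_i(x')-x_N+\lambda\rho(\bx)\bigr)\psi(\lambda)\,d\lambda,
\]
and then claim that ``exactly one derivative ever falls on $\varphi_i$ regardless of $|\alpha|$.'' That claim is false for this formula. Differentiating twice in tangential directions $x_j,x_k$ ($j,k<N$), the chain rule produces a term $(\partial_N\phi)(\cdot)\cdot 2\,\partial_k\partial_j\varphi_i(x')$, which requires a second derivative of the Lipschitz map $\varphi_i$ and is therefore undefined. The problem compounds with each further tangential derivative, so the operator as written is not bounded on $W^{m,p}$ for $m\ge 2$.

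Stein's actual formula avoids $\varphi_i$ entirely in the integrand: one takes
\[
E\phi(\bx)=\int_1^\infty \phi\bigl(x',\,x_N+c\,\lambda\,\rho(\bx)\bigr)\psi(\lambda)\,d\lambda,
\]
with $c>0$ chosen (using only the Lipschitz constant of $\varphi_i$) so that $x_N+c\lambda\rho(\bx)>\varphi_i(x')$ whenever $\lambda\ge 1$ and $x_N<\varphi_i(x')$. Now every derivative lands on $\phi$ or on the smooth function $\rho$, whose bounds $|\partial^\alpha\rho|\lesssim\rho^{1-|\alpha|}$ combine with the vanishing moments of $\psi$ exactly as you describe. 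Replace your formula by this one and the rest of your sketch goes through.
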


	It is now appropriate to present the main theorem.
	
	\begin{theorem}[{Generalized vector potential theorem}]
		\label{thmmain}
		Assume that vector function $\boldsymbol{v}$ has compact support set in $\mathbb{R}^{N}$, $\boldsymbol{v} \in {L}^{p}(\mathbb{R}^{N},\mathbb{R}^{N})$, and $\operatorname{curl}  \boldsymbol{v} \in {W}^{m,p}(\mathbb{R}^{N},\mathbb{T})$, then there exists $\bw \in {W}_{\mathrm{loc}}^{m+1,p}(\mathbb{R}^{N},\mathbb{R}^{N})$ such that 
		$\operatorname{curl}  \bw=  \operatorname{curl} \boldsymbol{v}$ and $\operatorname{div}\bw= 0$.
		Additionally, the following estimate holds for every bounded domain $\Omega$,
		$$
		\| \bw \|_{W^{m+1,p}(\Omega,\mathbb{R}^N)} \leq C(N,p,\Omega)\left(\| \bv \|_{L^{p}(\mathbb{R}^N, \mathbb{R}^N)} + \| \curl \bv  \|_{W^{m,p}(\mathbb{R}^N,\mathbb{T})}\right).
		$$
	\end{theorem}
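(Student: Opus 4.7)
The approach is constructive: define $\bw$ via a Newton potential of $\bv$, then read off the three required properties using the Hodge-type identity in Lemma \ref{lemmalaplace1} together with the Calderon--Zygmund and Newton potential estimates in Lemmas \ref{InteriorEstimate}, \ref{CZ}, \ref{Wu}.

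Let $\bu(\bx) := \int_{\mathbb{R}^N} \lambda(\bx-\by)\bv(\by)\,d\by$, applied componentwise. Since $\bv$ is compactly supported and lies in $L^p$, Lemma \ref{Wu} places $\bu$ in $W^{1,p}_{\mathrm{loc}}(\mathbb{R}^N,\mathbb{R}^N)$, while $-\Delta \bu = \bv \in L^p$ combined with Lemma \ref{InteriorEstimate} promotes this to $\bu \in W^{2,p}_{\mathrm{loc}}$. Then set
\[
\bw := \scurl \curl \bu.
\]
I would verify the three asserted properties in turn. First, for the \emph{curl identity}, apply Lemma \ref{lemmalaplace1} (extended from $C^2$ to $W^{2,p}_{\mathrm{loc}}$ by mollification) to rewrite $\bw = -\Delta \bu + \operatorname{grad}\operatorname{div} \bu = \bv + \operatorname{grad}\operatorname{div} \bu$; since $[\curl \operatorname{grad}\eta]_{ij} = \tfrac{1}{2}(\partial_i\partial_j \eta - \partial_j\partial_i \eta) = 0$, this yields $\curl \bw = \curl \bv$. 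Second, for the divergence, $\operatorname{div} \bw = 2\partial_i\partial_j [\curl \bu]_{ij}$, and swapping the dummy indices together with the antisymmetry $[\curl \bu]_{ij} = -[\curl \bu]_{ji}$ forces this expression to equal its own negative, hence zero.

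For the \emph{regularity and estimate}, note that because $\bv$ is compactly supported, the distributional identity $\partial_i(\lambda * v_j) = \lambda * \partial_i v_j$ holds, so $\curl \bu = \lambda * \curl \bv$ componentwise: $\curl \bu$ is the Newton potential of the compactly supported tensor $\curl \bv \in W^{m,p}(\mathbb{R}^N,\mathbb{T})$. Iterating Lemma \ref{CZ} on the top two derivatives and invoking Lemma \ref{Wu} for the lower-order ones yields
\[
\|\curl \bu\|_{W^{m+2,p}(\Omega,\mathbb{T})} \leq C(N,p,\Omega)\bigl(\|\bv\|_{L^{p}(\mathbb{R}^N,\mathbb{R}^N)} + \|\curl \bv\|_{W^{m,p}(\mathbb{R}^N,\mathbb{T})}\bigr)
\]
on every bounded domain $\Omega$, and since $\bw$ is obtained from $\curl \bu$ by a single derivative, this delivers $\bw \in W^{m+1,p}_{\mathrm{loc}}(\mathbb{R}^N,\mathbb{R}^N)$ with the stated bound.

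The main obstacle is this last regularity step at $m=0$: one must check that $\lambda*\curl \bv$ genuinely gains two derivatives in $L^p_{\mathrm{loc}}$ (so that $\bw$ gains one), for which Calderon--Zygmund is exactly the right tool, and one must justify the derivative--convolution exchange for the only-$L^p$ field $\bv$ through a standard mollification argument. The same mollification simultaneously legitimizes applying the pointwise identity of Lemma \ref{lemmalaplace1} to $\bu$, which is only $W^{2,p}_{\mathrm{loc}}$ rather than $C^2$.
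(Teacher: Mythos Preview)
Your construction is correct and actually somewhat more direct than the paper's. The paper builds \emph{two} candidates: first $\bw = \bv + \nabla\operatorname{div}(\lambda*\bv)$, for which it verifies $\curl\bw=\curl\bv$, $\operatorname{div}\bw=0$, and $\bw\in L^p(\mathbb{R}^N)$ (via Lemma~\ref{CZ}); then separately $\bw_1 = \scurl(\lambda*\curl\bv)$, which lies in $W^{1,p}_{\mathrm{loc}}$ by Lemmas~\ref{CZ}--\ref{Wu}. It then checks that $(\bw-\bw_1,\Delta\bq)=0$ for all test $\bq$, so Weyl's lemma (Theorem~\ref{weyl}) makes the difference smooth and the interior estimate (Lemma~\ref{InteriorEstimate}) yields the $W^{1,p}_{\mathrm{loc}}$ bound on $\bw$. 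Your route collapses these two objects into one: with $\bu=\lambda*\bv$ you recognize $\scurl\curl\bu$ simultaneously as $\bv+\nabla\operatorname{div}\bu$ (via Lemma~\ref{lemmalaplace1} and $-\Delta\bu=\bv$) and as $\scurl(\lambda*\curl\bv)$ (via the derivative--convolution exchange $\curl\bu=\lambda*\curl\bv$), so the paper's $\bw$ and $\bw_1$ are in fact the \emph{same} function and Weyl's lemma is never invoked. What your approach buys is elementarity---no appeal to regularity of weakly harmonic functions---at the price of having to justify the exchange $\partial_i(\lambda*v_j)=\lambda*\partial_i v_j$ carefully by mollification, exactly as you flag. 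The paper's detour through Weyl's lemma trades that exchange for an a~posteriori smoothness argument; both are sound, and yours is arguably the cleaner packaging.
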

	\begin{proof}
		{The results of the above theorem will be demonstrated constructively. The construction approach remains generally consistent across various values of $m$. However, for $m=0$, the overall lower regularity complicates the construction process. Hence, a detailed description is provided specifically for the case of $m=0$.} \\
		{\textbf{Step 1} constructs $\bw \in   {L}^{p}(\mathbb{R}^N, \mathbb{R}^N)$ } such that $\operatorname{curl} \bw = \operatorname{curl} \boldsymbol{v}$ and $\operatorname{div} \boldsymbol{w} = 0$. Define 
		\begin{equation}
			\label{g}
			\eta(\bx) = \operatorname{div}\int_{\mathbb{R}^{N}} \lambda(\bx-\by) \boldsymbol{v}(\by)d\by =  \operatorname{div} \boldsymbol{g}(\bx).
		\end{equation}
		Let $\bw = \bv + \nabla \eta = \bv + \nabla \operatorname{div} \boldsymbol{g}$. {It follows by Lemma \ref{CZ} that}
		\begin{equation}
			\label{estimate1}
			\begin{aligned}
				\| \bw\|_{L^p(\mathbb{R}^N,\mathbb{R}^N)} 
				\leq  \| \bv\|_{L^p(\mathbb{R}^N, \mathbb{R}^N)} + C(N,p)\| \bv\|_{L^p(\mathbb{R}^N,\mathbb{R}^N)} 
				\leq C(N,p) \| \bv\|_{L^p(\mathbb{R}^N,\mathbb{R}^N)}.
			\end{aligned}
		\end{equation}
		Thus $\bw \in {L}^{p}(\mathbb{R}^N, \mathbb{R}^N)$. It is evident that $\operatorname{curl} \boldsymbol{\omega} = \operatorname{curl} \boldsymbol{v}$ because of $\curl ( \nabla \operatorname{div} \boldsymbol{g} ) =0$. By the definition of $\boldsymbol{g}(\bx)$ in \eqref{g}, it follows that $-\triangle {g}_i = {v}_i, i =1,2,\cdots,N$. Thus,
		\begin{equation}
			\label{eqnotsum1}
			(\nabla g_i,  \frac{\partial }{\partial x_i}\nabla \psi ) =  (\nabla g_i, \nabla \frac{\partial }{\partial x_i} \psi) = (v_i, \frac{\partial \psi}{\partial x_i}),\quad  i =1,2,\cdots,N,\quad  \forall \psi \in C_0^{\infty}(\mathbb{R}^N),
		\end{equation}
		which implies 
		\begin{equation}
			\label{eqnotsum2}
			-(\nabla  \frac{\partial }{\partial x_i} g_i,  \nabla \psi) = (v_i, \frac{\partial \psi}{\partial x_i}),\quad i =1,2,\cdots,N,\quad  \forall \psi \in C_0^{\infty}(\mathbb{R}^N).
		\end{equation}
		It is important to note that the notation {used in \eqref{eqnotsum1} and \eqref{eqnotsum2} do not} employ Einstein summation convention.
		Summing up \eqref{eqnotsum2} for $i =1,2,\cdots,N$, it follows that
		$$
		-(\nabla (\operatorname{div} \boldsymbol{g(x)}), \nabla \psi) = (v, \nabla \psi), \quad \forall \psi \in C_0^{\infty}(\mathbb{R}^N), 
		$$
		which proves that $\operatorname{div} \boldsymbol{w}= 0$.\\	 
		{{\textbf{Step 2} constructs $\boldsymbol{w}_1 \in {W}_{\mathrm{loc}}^{1,p}(\mathbb{R}^{N},\mathbb{R}^N)$}} satisfies 
		\begin{equation}
			\label{comb}
			(\nabla \boldsymbol{w}_1, \nabla \bq ) = (\operatorname{curl} \boldsymbol{v}, \operatorname{curl} \bq),\quad  \forall \bq \in C^{\infty}_0(\mathbb{R}^N, \mathbb{R}^N).
		\end{equation}
		Let $$\boldsymbol{w}_1 = \operatorname{curl}^{*}\int_{\mathbb{R}^{N}} \lambda(\bx- \by) \operatorname{curl}\boldsymbol{v}(\by)d\by = \operatorname{curl}^{*}\boldsymbol{H}(\bx).$$
		{Then for any bounded domain $\Omega$, it follows by Lemma \ref{CZ} and \ref{Wu} that}
		\begin{equation*}
			\begin{aligned}
				\| \bw_{1} \|_{W^{1,p}(\Omega,\mathbb{R}^N)} &= \| \bw_{1} \|_{L^{p}(\Omega,\mathbb{R}^N)} + | \bw_{1} |_{W^{1,p}(\Omega,\mathbb{R}^N)} \\
				& \leq  | \boldsymbol{H} |_{W^{1,p}(\Omega,\mathbb{T})} + | \boldsymbol{H} |_{W^{2,p}(\Omega,\mathbb{T})}   \leq C(N,p,\Omega) \|\curl \bv \|_{L^p(\mathbb{R}^N,\mathbb{R}^N)}.
			\end{aligned}
		\end{equation*}
		Thus $\bw_1 \in {W}_{\mathrm{loc}}^{1,p}(\mathbb{R}^{N},\mathbb{R}^N)$. {Following similar procedures as Step 1, one can verify that $- \triangle \bw_1 = \operatorname{curl}^{*} \operatorname{curl} \bv$ in weak sense, i.e., 
			\begin{equation*}
				\label{comb1}
				(\nabla \boldsymbol{w}_1, \nabla \bq ) = (\nabla \operatorname{curl}^{*} \boldsymbol{H}, \nabla \bq ) = (\operatorname{div} \operatorname{\nabla} \boldsymbol{H}, \operatorname{curl} \bq ) = (\operatorname{curl} \boldsymbol{v}, \operatorname{curl} \bq),\forall \bq \in {C}^{\infty}_0(\mathbb{R}^{N},\mathbb{R}^N).
		\end{equation*}}
		{{\textbf{Step 3} proves $\boldsymbol{w}-\bw_1 \in    {W}_{\mathrm{loc}}^{1,p}(\mathbb{R}^N, \mathbb{R}^N)$}}.
		{It follows from Step1 that }
		\begin{equation}
			\label{comb2}
			(\operatorname{curl} \boldsymbol{w}, \operatorname{curl} \bq )_{\mathbb{T}} + (\operatorname{div} \boldsymbol{w}, \operatorname{div} \bq) = (\operatorname{curl} \boldsymbol{v}, \operatorname{curl} \bq ), \quad \forall \bq \in {C}^{\infty}_0(\mathbb{R}^N, \mathbb{R}^N).
		\end{equation}
		{According to Lemma \ref{lemmacurlcurh} and \ref{lemmalaplace1}, the following identity holds}
		\begin{equation}
			\label{comb3}
			(\boldsymbol{w}, \triangle \bq ) =  (\operatorname{curl} \boldsymbol{w}, \operatorname{curl} \bq) + (\operatorname{div} \boldsymbol{w}, \operatorname{div} \bq), \quad \forall \bq \in {C}^{\infty}_0(\mathbb{R}^N, \mathbb{R}^N).
		\end{equation}
		Combining \eqref{comb2} \eqref{comb3} and \eqref{comb} together, one can obtain
		$$(\boldsymbol{w} - \boldsymbol{w}_1, \triangle \bq )=0,\quad \forall \bq \in {C}^{\infty}_0(\mathbb{R}^N, \mathbb{R}^N).$$
		Let $\bw_2 = \bw - \boldsymbol{w}_1$. It follows by Weyl's lemma (Theorem \ref{weyl}) that $\bw_{2} \in {W}_{\mathrm{loc}}^{1,p}(\mathbb{R}^{N},\mathbb{R}^N)$. Hence, $\boldsymbol{w} = \bw_1 + \bw_2 \in {W}_{\mathrm{loc}}^{1,p}(\mathbb{R}^{N},\mathbb{R}^N)$.  \\
		{{\textbf{Step 4} estimates $\| \bw \|_{W^{1,p}(\Omega,\mathbb{R}^N)}$ for any compact set $\Omega \subset \mathbb{R}^N$}}. By \eqref{estimate1}, {it holds that}
		\begin{equation*}
			\begin{aligned}
				\| \bw\|_{L^p(\Omega,\mathbb{R}^N)} \leq \| \bw\|_{L^p(\mathbb{R}^N,\mathbb{R}^N)} 
				\leq C(N,p) \| \bv\|_{L^p(\mathbb{R}^N,\mathbb{R}^N)}.
			\end{aligned}
		\end{equation*}
		{There exists an open set $\tilde{\Omega}$ such that $\Omega \subset \subset \tilde{\Omega}$ due to $\Omega$ is bounded. Owing to $-\triangle \bw_2 = 0$, the function $\bw_2$ has the smoothness enough for estimating. By Lemma \ref{InteriorEstimate},}
		\begin{equation}
			\label{estimatew2}
			\begin{aligned}
				|\bw_2|_{W^{1,p}(\Omega, \mathbb{R}^N)} \leq \|\bw_2\|_{W^{2,p}(\tilde{\Omega}, \mathbb{R}^N)} \leq C(N,p,\Omega,\tilde{\Omega})\| \bw_2\|_{L^p(\tilde{\Omega}, \mathbb{R}^N)},
			\end{aligned}
		\end{equation}
		Note that $\tilde{\Omega}$ depends on $\Omega$ by construction. {Then the following estimate holds} 
		\begin{equation}
			\label{estimatew}
			\begin{aligned}
				|\bw|_{W^{1,p}(\Omega, \mathbb{R}^N)} &\leq  |\bw_1 |_{W^{1,p}(\Omega,\mathbb{R}^N)} +  |\bw_2|_{W^{1,p}(\Omega, \mathbb{R}^N)} \\
				&\leq  C(N,p)\| \operatorname{curl} \bv \|_{L^p(\mathbb{R}^N, \mathbb{R}^N)} + C(N,p,\Omega)\| \bw_2\|_{L^p(\tilde{\Omega}, \mathbb{R}^N)} \\
				& \leq C(N,p,\Omega)\left(  \| \operatorname{curl} \bv \|_{L^p(\mathbb{R}^N, \mathbb{R}^N)} + \| \bw_1 \|_{L^p(\tilde{\Omega}, \mathbb{R}^N)} + \| \bw \|_{L^p(\tilde{\Omega}, \mathbb{R}^N)} \right).
			\end{aligned}
		\end{equation}
		Due to $-\triangle \boldsymbol{H}(\bx) = \curl \bv$, it follows from Lemma \ref{Wu} that
		\begin{equation}
			\label{estimate3}
			\| \bw_1 \|_{L^p(\tilde{\Omega}, \mathbb{R}^N)} \leq | \boldsymbol{H}(x) |_{W^{1,p}(\tilde{\Omega}, \mathbb{R}^N)} \leq C(N,p,\Omega) \| \operatorname{curl} \bv \|_{L^p(\mathbb{R}^N,\mathbb{R}^N)}.
		\end{equation}
		The estimate for $m=0$ is verified by combining \eqref{estimatew2}, \eqref{estimatew}, and \eqref{estimate3}. 
		
		The proof is completed for $m=0$ by combining the above steps. For the case $m \geq 1$, due to the improvement in regularity, it is possible to appropriately enhance the aforementioned part of the proof process. Specifically, one can define
		$$ \eta(\bx) = \int_{\mathbb{R}^{N}} \lambda(\bx-\by)\operatorname{div} \boldsymbol{v}(\by) \mathrm{d} \by.$$And it follows from $\curl \bv \in W^{1,p}(\mathbb{R}^N,\mathbb{T})$ that
		$$ -\triangle \bw_{1} = \operatorname{curl}^{*} \operatorname{curl} \boldsymbol{v} \quad \text{in} \quad \mathbb{R}^{N}.$$
	Utilizing a similar approach allows for the completion of the remaining proof.
	\end{proof}
	
	Employing the above generalized vector potential theorem, one can derive the general decomposition result in Theorem \ref{thmdecompostion}. For clarity, the details of the case $m=1$ are provided therein. 
	
	\begin{theorem}
		\label{vd}
		The following decomposition holds
		$$
		{W}_{0}^{1,p}(\operatorname{curl}; \Omega,\mathbb{R}^N)  = {W}_{0}^{2,p}(\Omega,\mathbb{R}^N) + \nabla(W^{2,p}_{0}(\Omega)),
		$$
		where $\Omega \subset \mathbb{R}^N$ is a bounded Lipschitz domain.
	\end{theorem}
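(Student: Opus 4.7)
The plan is to extend $\bv$ by zero to all of $\mathbb{R}^N$, produce a global vector potential via Theorem \ref{thmmain}, and then modify it by a gradient so that it vanishes outside $\Omega$; what remains is curl-free in $\Omega$, and Assumption A1 converts it into a gradient.

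Let $\bv \in W_0^{1,p}(\curl;\Omega,\mathbb{R}^N)$ and denote by $\tilde\bv$ its extension by zero to $\mathbb{R}^N$. Since $\bv \in W_0^{1,p}(\Omega,\mathbb{R}^N)$, its trace on $\Gamma$ vanishes, so an integration by parts in the spirit of Lemma \ref{lemmacurlcurh} shows that the distributional curl of $\tilde\bv$ is exactly the zero extension of $\curl\bv$. Because $\curl\bv \in W_0^{1,p}(\Omega,\mathbb{T})$, this extension lies in $W^{1,p}(\mathbb{R}^N,\mathbb{T})$ and has compact support. Applying Theorem \ref{thmmain} with $m=1$ yields $\bw \in W^{2,p}_{\mathrm{loc}}(\mathbb{R}^N,\mathbb{R}^N)$ such that
\[
\curl \bw = \curl \tilde\bv, \qquad \operatorname{div}\bw = 0 \quad \text{in } \mathbb{R}^N.
\]

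On $\mathbb{R}^N\setminus\overline\Omega$ both $\curl\bw$ and $\operatorname{div}\bw$ vanish, so $\bw$ is harmonic (and hence smooth) there. Assumption A1 applied on the exterior produces a potential $\xi$ with $\nabla\xi = \bw$ outside $\overline\Omega$. Choosing a ball $B_R \supset \overline\Omega$ so that $B_R\setminus\overline\Omega$ is a bounded Lipschitz domain, the Stein extension theorem (Theorem \ref{thmextension}) delivers $\tilde\xi \in W^{3,p}(\mathbb{R}^N)$ agreeing with $\xi$ on $B_R\setminus\overline\Omega$. Cutting off $\tilde\xi$ outside $B_R$ if necessary, set $\bw_0 := \bw - \nabla\tilde\xi$ on $B_R$ and $\bw_0 := 0$ elsewhere. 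Then $\bw_0$ vanishes on $\mathbb{R}^N\setminus\overline\Omega$ and lies in $W^{2,p}(\mathbb{R}^N,\mathbb{R}^N)$; the standard characterization of $W_0^{k,p}$-functions as restrictions of $\overline\Omega$-supported $W^{k,p}(\mathbb{R}^N)$-functions (valid on Lipschitz domains) gives $\bw_0|_\Omega \in W_0^{2,p}(\Omega,\mathbb{R}^N)$. Moreover $\curl\bw_0 = \curl\bw = \curl\bv$ in $\Omega$.

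The remainder $\bv - \bw_0$ is curl-free in $\Omega$, so Assumption A1 furnishes $\eta \in W^{2,p}(\Omega)$ with $\bv - \bw_0 = \nabla\eta$. Because $\nabla\eta \in W_0^{1,p}(\Omega,\mathbb{R}^N)$, its full trace on $\partial\Omega$ vanishes; in particular every tangential derivative of $\eta$ on $\partial\Omega$ is zero, so $\eta$ is locally constant on $\Gamma$. Subtracting the common constant $c$ gives $\eta - c \in W_0^{2,p}(\Omega)$, and the decomposition $\bv = \bw_0 + \nabla(\eta - c)$ follows. The main obstacle is the exterior potential step: $\mathbb{R}^N\setminus\overline\Omega$ need not be connected or simply connected, so A1 must be invoked component by component, and one must verify that the locally constant boundary values of $\eta$ on the disjoint pieces of $\Gamma$ agree after a single subtraction. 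The Lipschitz regularity of $\partial\Omega$ is what underpins both the $W_0^{k,p}$-characterization used for $\bw_0$ and the applicability of Theorem \ref{thmextension} to $B_R\setminus\overline\Omega$.
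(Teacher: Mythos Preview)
Your outline is essentially the paper's argument, but you take an unnecessary detour at the potential step that creates exactly the multi-component obstacle you flag at the end. The paper applies Assumption~(A1) not on the exterior but on all of $\mathbb{R}^N$, to the globally curl-free field $\tilde\bv-\bw$: since $\mathbb{R}^N$ is simply connected this is unconditional, and it yields a single global $\eta\in W^{2,p}_{\mathrm{loc}}(\mathbb{R}^N)$ with $\nabla\eta=\tilde\bv-\bw$ everywhere. On $\Omega^c$ this reads $\nabla\eta=-\bw$, whence $\eta\in W^{3,p}_{\mathrm{loc}}(\Omega^c)$; the Stein extension of $\eta|_{\Omega^c}$ to some $\tilde\eta\in W^{3,p}_{\mathrm{loc}}(\mathbb{R}^N)$ then gives the decomposition $\bv=(\bw+\nabla\tilde\eta)+\nabla(\eta-\tilde\eta)$ in $\Omega$, where $\eta-\tilde\eta$ and $\bw+\nabla\tilde\eta$ vanish \emph{identically} on $\Omega^c$ by construction. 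Hence both pieces lie in the appropriate $W^{2,p}_0$ space directly, with no constants to reconcile across boundary components and no second invocation of (A1) on $\Omega$. Your exterior potential $\xi$ is, up to sign and an additive constant on each exterior component, just this global $\eta$ restricted to $\Omega^c$; working with the global object from the start removes the issue you identified.
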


	\begin{proof}Given $\bv \in{W}_{0}^{1,p}(\operatorname{curl}; \Omega,\mathbb{R}^N)$ defined in \eqref{defspace}, then $\boldsymbol{v}$ satisfies
		$$
		\gamma_{0} (\boldsymbol v )= 0, \quad \gamma_{0}( \operatorname{curl} \boldsymbol{v}) =  {0}, \text{ on } \Gamma.
		$$
		Let $\tilde{\boldsymbol{v}}$ be the zero extension of vector function $\boldsymbol{ v}$ from $\Omega$ to $\mathbb{R}^N$, i.e., 
		\begin{equation}
			\begin{aligned}
				\label{eqtbv}
				\tilde{\boldsymbol{ v}} = \boldsymbol{ v}, \text{ in }  \Omega,\quad  \tilde{\boldsymbol{ v}} = 0, \text{ in }   \Omega^{c}.
			\end{aligned}
		\end{equation}
		Then $\tilde{\bv} \in  {W}^{1,p}(\mathbb{R}^{N},\mathbb{R}^N)$ and $\operatorname{curl} \tilde{\bv} \in {W}^{1,p}(\mathbb{R}^{N},\mathbb{R}^N)$. By Theorem \ref{thmmain}, there exists a function $\tilde{\bw} \in {W}_{\mathrm{loc}}^{2,p}(\mathbb{R}^{N},\mathbb{R}^N) $ such that
		\begin{equation}
			\label{eqtbw}
			\operatorname{curl}  \tilde{\bw} = \operatorname{curl} \tilde{\bv}, \quad \operatorname{div} \boldsymbol{\tilde \omega} = 0.
		\end{equation}
		It follows from \eqref{eqtbw} that $\operatorname{curl} ({\tilde \bv} - {\tilde \bw}) = 0$ in $\mathbb{R}^N$. According to Assumption (A1), there exists a function $\eta \in W_{\mathrm{loc}}^{2,p}(\mathbb{R}^{N})$ such that $\nabla \eta = {\tilde \bv} - {\tilde \bw}$. By \eqref{eqtbv}, it follows that
		\begin{equation}
			\label{vd3}
			\begin{aligned}
				\boldsymbol{ v}  &= { \tilde{\bw} } + \nabla \eta,  \quad  \text{in } \Omega,  \\
				{0}  &= \tilde{\bw}  + \nabla \eta,  \quad  \text{in }  \Omega^{c}. \\
			\end{aligned}
		\end{equation}
		{Due to the restriction of} $\tilde \bw$ to ${\Omega^{c}}$ belongs to ${W}_{\mathrm{loc}}^{2,p}(\Omega^{c},\mathbb{R}^N)$ in \eqref{vd3}, then $ \eta \in {W}_{\mathrm{loc}}^{3,p}(\Omega^{c})$.  
		{It follows by Theorem \ref{thmextension} that there exists a $\tilde \eta \in W^{3,p}_{\mathrm{loc}}(\mathbb{R}^{N})$ which is the extension of $\eta$ from ${\Omega^{c}}$ to $\mathbb{R}^{N}$. Note that $\eta$ may not equal to $\tilde \eta$ in $\Omega$ though $\eta = \tilde \eta$ in $\Omega^{c}$.} However, the following relationship holds
		\begin{equation}
			\label{eqdecomposition}
			\begin{aligned}
				\boldsymbol{v}  = {\tilde{\bw}} + \nabla \eta  = {\tilde{\bw}} + \nabla{\tilde{\eta}} + \nabla(\eta - \tilde{\eta}), \text{ in } \Omega.
			\end{aligned}
		\end{equation}
		Since ${\tilde{\bw}} + \nabla{\tilde \eta} \in  {W}_{\mathrm{loc}}^{2,p}(\mathbb{R}^{N},\mathbb{R}^N)$ and ${\tilde{\bw}} + \nabla{\tilde \eta}  = {0}$ in $\Omega^{c}$, then $ {\tilde{\bw}} + \nabla{\tilde \eta} \in {W}_{0}^{2,p}(\Omega,\mathbb{R}^N)$ . Similarly, $\eta - \tilde \eta \in W^{2,p}_{0}(\Omega)$ due to $\eta \in W^{2,p}(\Omega), \tilde \eta \in W^{3,p}_{\mathrm{loc}}(\mathbb{R}^{3})$ and $\eta - \tilde \eta = 0$ in $\Omega^{c}$. Thus the proof is completed by \eqref{eqdecomposition}.
	\end{proof}
	
	By employing procedures analogous to those detailed above, the proof of Theorem \ref{thmdecompostion} can be established. It is crucial to recognize that a fundamental step in this proof is the extension of a function from $\Omega$ to $\mathbb{R}^N$. Subsequent lemmas can be derived using similar methods.
	\begin{theorem}
		\label{thmvdd}
		For the bounded Lipschitz domain $\Omega \subset \mathbb{R}^N$. The following decomposition holds
		\begin{equation}
			\label{decomposition35}
			\begin{aligned}
				L^{p}(\operatorname{curl}; \Omega,\mathbb{R}^N) &= W^{1,p}(\Omega,\mathbb{R}^N) + \nabla (W^{1,p}(\Omega)), \\
				L_0^{p}(\operatorname{curl}; \Omega,\mathbb{R}^N) &= W_0^{1,p}(\Omega,\mathbb{R}^N) + \nabla (W_0^{1,p}(\Omega)).
			\end{aligned}
		\end{equation}
	\end{theorem}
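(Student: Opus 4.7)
The plan is to mimic the proof of Theorem~\ref{vd}: extend $\bv\in L^p(\curl;\Omega,\mathbb{R}^N)$ (resp.\ $L^p_0(\curl;\Omega,\mathbb{R}^N)$) to a compactly supported $\tilde\bv\in L^p(\mathbb{R}^N,\mathbb{R}^N)$ with $\curl\tilde\bv\in L^p(\mathbb{R}^N,\mathbb{T})$, apply the generalized vector potential Theorem~\ref{thmmain} with $m=0$ to obtain $\tilde\bw\in W^{1,p}_{\mathrm{loc}}(\mathbb{R}^N,\mathbb{R}^N)$ with $\curl\tilde\bw=\curl\tilde\bv$ and $\mathrm{div}\,\tilde\bw=0$, invoke Assumption (A1) to write $\tilde\bv-\tilde\bw=\nabla\eta$ for some $\eta\in W^{1,p}_{\mathrm{loc}}(\mathbb{R}^N)$, and restrict back to $\Omega$. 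The target regularity $W^{1,p}$ (rather than $W^{2,p}$ as in Theorem~\ref{vd}) simply reflects the one-order drop in the input; the two decompositions in \eqref{decomposition35} differ only in how the extension is performed.

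For the zero-trace identity I would take $\tilde\bv$ to be the zero extension of $\bv$; since $\gamma_t(\bv)=0$ on $\Gamma$, a routine integration by parts shows that the distributional $\curl\tilde\bv$ is the zero extension of $\curl\bv$, so $\tilde\bv$ and $\curl\tilde\bv$ both lie in $L^p$ with compact support. After the sequence above, the identity $\tilde\bv=0$ in $\Omega^c$ gives $\nabla\eta=-\tilde\bw\in W^{1,p}_{\mathrm{loc}}(\Omega^c)$, hence $\eta\in W^{2,p}_{\mathrm{loc}}(\Omega^c)$. Stein extension (Theorem~\ref{thmextension}) produces $\tilde\eta\in W^{2,p}_{\mathrm{loc}}(\mathbb{R}^N)$ with $\tilde\eta=\eta$ in $\Omega^c$, and writing
\begin{equation*}
\bv=(\tilde\bw+\nabla\tilde\eta)|_\Omega+\nabla(\eta-\tilde\eta)|_\Omega
\end{equation*}
exhibits $\bv$ as the sum of a $W^{1,p}_0(\Omega,\mathbb{R}^N)$-field and a $\nabla W^{1,p}_0(\Omega)$-field, since both summands live in $W^{1,p}_{\mathrm{loc}}(\mathbb{R}^N)$ and vanish in $\Omega^c$.

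For the full $L^p(\curl)$ identity the zero extension is no longer admissible because $\gamma_t(\bv)$ need not vanish, and this is where the essential difficulty lies. I would construct a continuous $\curl$-preserving extension $E\colon L^p(\curl;\Omega,\mathbb{R}^N)\to L^p(\curl;\mathbb{R}^N,\mathbb{R}^N)$ producing compactly supported fields by working chart-by-chart: using Definition~\ref{dfLipschitz}, flatten each boundary piece by the bi-Lipschitz map $(y',y_N)\mapsto(y',y_N-\varphi_i(y'))$, split the pulled-back vector field into tangential and normal components, extend the tangential components evenly and the normal component oddly across $\{y_N=0\}$, push back to $V(o_i)$, and glue with a partition of unity together with a cutoff for compact support. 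A direct distributional calculation on the flat half-space shows that this even/odd rule preserves every antisymmetric entry of the $\curl$ matrix with no surface Dirac contribution at $\{y_N=0\}$, so indeed $\curl(E\bv)\in L^p(\mathbb{R}^N,\mathbb{T})$. With $\tilde\bv:=E\bv$ in hand the three-step pipeline yields $\bv=\tilde\bw|_\Omega+\nabla(\eta|_\Omega)\in W^{1,p}(\Omega,\mathbb{R}^N)+\nabla W^{1,p}(\Omega)$.

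The main obstacle is verifying that this chart-wise reflection actually lands in $L^p(\curl;\mathbb{R}^N,\mathbb{R}^N)$ after the Lipschitz change of variables: the Jacobian of the flattening depends on $\nabla\varphi_i\in L^\infty$, which intertwines tangential and normal components, so the even/odd rule must be applied in the genuine geometric tangential-normal splitting on $\Gamma$ rather than naively in the flattened coordinates; one must simultaneously certify that no boundary distribution appears in $\curl(E\bv)$ despite the absence of classical traces for $L^p$ vector fields, and that the bound $\|E\bv\|_{L^p(\curl;\mathbb{R}^N)}\leq C\|\bv\|_{L^p(\curl;\Omega)}$ is uniform in the charts. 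Once this extension is secured, the rest of the argument is a transparent application of Theorem~\ref{thmmain} and Assumption (A1).
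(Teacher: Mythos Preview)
Your proposal is correct and follows essentially the same approach as the paper. The paper does not spell out a proof of Theorem~\ref{thmvdd}; it simply remarks that the second line of \eqref{decomposition35} follows from the zero extension argument of Theorem~\ref{vd} (exactly as you describe, with the regularity shifted down by one), and that for the first line one uses an $L^p(\curl)$ extension operator as in \cite[Lemma~2.2]{chen2000MatchingDC}, which is precisely the chart-by-chart, flatten-and-reflect construction you outline, after which the argument proceeds via Theorem~\ref{thmmain} and Assumption~(A1) just as you propose.
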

	
	\begin{theorem}
		\label{thmvddd}
		For the bounded Lipschitz domain $\Omega \subset \mathbb{R}^3$. The following decomposition holds
		\begin{equation}
			\label{decomposition36}
			\begin{aligned}
				W^{1,p}(\operatorname{curl}; \Omega,\mathbb{R}^3) &= W^{2,p}(\Omega,\mathbb{R}^3) + \nabla (W^{2,p}(\Omega)).
			\end{aligned}
		\end{equation}
	\end{theorem}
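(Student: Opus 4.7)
The plan is to mimic the proof of Theorem \ref{vd}, but to use the Stein extension of Theorem \ref{thmextension} in place of the zero extension, since the boundary values of $\bv$ and $\curl \bv$ need not vanish here.

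Given $\bv \in W^{1,p}(\curl; \Omega, \mathbb{R}^3)$, first apply the Stein extension to $\bv \in W^{1,p}(\Omega, \mathbb{R}^3)$ and to $\curl \bv \in W^{1,p}(\Omega, \mathbb{T})$, then multiply each by a smooth cutoff that equals $1$ on a neighborhood of $\bar{\Omega}$, yielding compactly supported $\tilde{\bv} \in W^{1,p}(\mathbb{R}^3, \mathbb{R}^3)$ and $\tilde{\boldsymbol{A}} \in W^{1,p}(\mathbb{R}^3, \mathbb{T})$ with $\tilde{\boldsymbol{A}} = \curl \bv$ inside $\Omega$. Next, reproducing the potential construction in the proof of Theorem \ref{thmmain}, define $\boldsymbol{H}(\bx) = \int_{\mathbb{R}^3} \lambda(\bx - \by)\, \tilde{\boldsymbol{A}}(\by)\, \mathrm{d}\by$ component-wise and set $\tilde{\bw} = \scurl \boldsymbol{H}$. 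By Lemmas \ref{CZ} and \ref{Wu}, $\boldsymbol{H} \in W^{3,p}_{\mathrm{loc}}(\mathbb{R}^3, \mathbb{T})$, so $\tilde{\bw} \in W^{2,p}_{\mathrm{loc}}(\mathbb{R}^3, \mathbb{R}^3)$; the antisymmetry of $\boldsymbol{H}$ also gives $\operatorname{div} \tilde{\bw} = 0$.

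Once the identity $\curl \tilde{\bw} = \tilde{\boldsymbol{A}}$ is established inside $\Omega$, one has $\curl(\bv - \tilde{\bw}) = 0$ there. Assumption (A1) then yields a scalar $\eta$ with $\bv - \tilde{\bw} = \nabla \eta$ in $\Omega$; since $\bv \in W^{1,p}(\Omega, \mathbb{R}^3)$ and $\tilde{\bw}|_\Omega \in W^{2,p}(\Omega, \mathbb{R}^3)$ (by compactness of $\bar{\Omega}$), we obtain $\eta \in W^{2,p}(\Omega)$. The decomposition $\bv = \tilde{\bw}|_\Omega + \nabla \eta$ then places both summands in the spaces required by the theorem.

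The main obstacle is verifying $\curl \tilde{\bw} = \tilde{\boldsymbol{A}}$. Repeating the computation from Step 3 of the proof of Theorem \ref{thmmain} shows only $\scurl(\curl \tilde{\bw} - \tilde{\boldsymbol{A}}) = 0$, so in $\mathbb{R}^3$ the associated vector of the difference is merely a gradient $\nabla \phi$. For this gradient to vanish, the vector associated to $\tilde{\boldsymbol{A}}$ must be divergence-free on all of $\mathbb{R}^3$, a property the naive Stein extension does not automatically preserve. The remedy is to refine $\tilde{\boldsymbol{A}}$ by subtracting a divergence-correcting term supported in $\Omega^c$---for instance, solving $\operatorname{div}\bu = -\operatorname{div}\tilde{\boldsymbol{A}}$ on an annular neighborhood of $\partial\Omega$ with $\bu$ vanishing on that boundary---thereby producing an extension in $W^{1,p}(\mathbb{R}^3, \mathbb{T})$ that still agrees with $\curl\bv$ on $\Omega$ and is globally divergence-free. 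Once this refinement is in place, the remaining steps proceed as sketched.
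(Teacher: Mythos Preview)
Your proposal is correct and close in spirit to what the paper indicates. The paper does not write out a detailed proof of Theorem~\ref{thmvddd}; it simply points to the extension methodology of \cite{hiptmair2002finite} and then reruns the argument of Theorem~\ref{vd}. In that route one extends $\bv$ itself to a compactly supported element $\tilde\bv$ of $W^{1,p}(\curl;\mathbb{R}^3)$ and invokes Theorem~\ref{thmmain} as a black box, so that $\curl\tilde\bw=\curl\tilde\bv$ comes for free. You instead extend $\curl\bv$ separately, correct its divergence on $B\setminus\bar\Omega$, and rebuild the potential $\tilde\bw=\scurl\boldsymbol H$ by hand; this is essentially the content of a curl-preserving extension, unpacked, so the two arguments are really the same mechanism viewed from different ends.

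Two minor points worth tightening. First, your Stein extension $\tilde\bv$ of $\bv$ is never used and can be dropped. Second, once the corrected $\tilde{\boldsymbol A}$ is globally divergence-free with compact support, the identity $\curl\tilde\bw=\tilde{\boldsymbol A}$ holds on all of $\mathbb{R}^3$, not merely in $\Omega$: the associated vector of the difference is curl-free (from $\scurl(\curl\tilde\bw-\tilde{\boldsymbol A})=0$) and divergence-free (both pieces are), hence harmonic, and it decays at infinity since $\tilde{\boldsymbol A}$ has compact support and $\boldsymbol H$ is a Newtonian potential; Liouville then forces it to vanish. Finally, the Bogovskii-type correction you invoke requires the compatibility condition $\int_{B\setminus\bar\Omega}\operatorname{div}\tilde a\,\mathrm{d}\bx=0$; this does hold, since by the divergence theorem it reduces to $\int_{\partial\Omega}(\nabla\times\bv)\cdot\boldsymbol n\,\mathrm{d}S=0$, but it is worth stating explicitly (and one should note that $B\setminus\bar\Omega$ is a bounded Lipschitz domain on which such a right inverse of $\operatorname{div}$ is available).
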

	
Extending functions with zero boundary conditions is straightforward and leads directly to the result in the second line of \eqref{decomposition35}. For the first row of \eqref{decomposition35}, the extension of functions in $L^{p}(\operatorname{curl}; \Omega,\mathbb{R}^N)$ adheres to similar procedures as those detailed in \cite[Lemma 2.2]{chen2000MatchingDC}. With respect to the decomposition result \eqref{decomposition36}, the methodologies described in \cite{hiptmair2002finite} are applicable to functions in $W^{1,p}(\operatorname{curl}; \Omega,\mathbb{R}^3)$.

	\section{Trace theorem for functions in $W^{m,p}(\Omega)$}

This section first presents some results of the trace theorem for functions in $W^{m,p}(\Omega)$, with $m \leq 3, 1 < p < \infty$, and Lipschitz domain $\Omega \subset \mathbb{R}^N$. Subsequently, it employs the generalized vector potential theorem in earlier sections to give an alternative proof strategy for the general trace theorem.
	
	Given $\phi \in W^{1, p}(\Omega)$, Gagliardo \cite{gagliardo1957caratterizzazioni} proves that the operator $\gamma_{0}(\phi) \coloneqq \phi_{\mid \Gamma}$ is linear and continuous from $W^{1, p}(\Omega)$ onto $W^{1-1 / p, p}(\Gamma)$ for $1 < p<\infty$ and has a continuous right inverse.  Let $\gamma_{1}(\phi) \coloneqq (\nabla \phi \cdot \boldsymbol{n})_{\mid \Gamma}$. The range of the trace operator $\left(\gamma_{0}, \gamma_{1}\right)$ with functions in ${W}^{2, p}(\Omega)$ for $1 < p<\infty$ has been solved as follows, see \cite{grisvard2011elliptic, geymonat2000existence, duran2001traces} for $N=2$ and  \cite{buffa2001traces} for $N=3$. 
	
	\begin{theorem}
		\label{tracek=11}
		The range of $\left(\gamma_{0}, \gamma_{1}\right)$ is the set of $\left(\phi_{0}, \phi_{1}\right) \in W^{1, p}(\Gamma) \times$ $L^{p}(\Gamma)$ such that
		$$
		\nabla_{\Gamma} \phi_{0}+\phi_{1} \boldsymbol{n} \in W^{1-1/p, p}(\Gamma,\mathbb{R}^N).
		$$
	\end{theorem}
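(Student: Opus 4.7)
The plan is to verify necessity directly from Gagliardo's trace theorem and to establish sufficiency by producing $\phi \in W^{2,p}(\Omega)$ with $\gamma_0(\nabla\phi) = \boldsymbol{w} := \nabla_\Gamma\phi_0 + \phi_1\boldsymbol{n}$, from which $\gamma_0(\phi) = \phi_0$ (up to an additive constant on each connected component of $\Omega$) and $\gamma_1(\phi) = \phi_1$ will follow by extracting tangential and normal parts.

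\emph{Necessity.} For $\phi \in W^{2,p}(\Omega)$ the gradient $\nabla\phi$ lies in $W^{1,p}(\Omega,\mathbb{R}^N)$, so applying Gagliardo's trace theorem componentwise yields $\gamma_0(\nabla\phi) \in W^{1-1/p,p}(\Gamma,\mathbb{R}^N)$. The pointwise tangential--normal splitting on $\Gamma$ gives $\nabla\phi = \nabla_\Gamma\phi_0 + \phi_1\boldsymbol{n}$, so the sum inherits the $W^{1-1/p,p}$ regularity.

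\emph{Sufficiency.} Given $\boldsymbol{w} \in W^{1-1/p,p}(\Gamma,\mathbb{R}^N)$, I would use Gagliardo's right inverse componentwise to lift to $\boldsymbol{W} \in W^{1,p}(\Omega,\mathbb{R}^N)$ with $\gamma_0(\boldsymbol{W}) = \boldsymbol{w}$; the task reduces to decomposing $\boldsymbol{W} = \nabla\phi + \boldsymbol{Z}$ with $\phi \in W^{2,p}(\Omega)$ and $\boldsymbol{Z} \in W^{1,p}_0(\Omega,\mathbb{R}^N)$. Following the template of the proof of Theorem \ref{vd}, extend $\boldsymbol{W}$ to $\tilde{\boldsymbol{W}} \in W^{1,p}(\mathbb{R}^N,\mathbb{R}^N)$ with compact support via Theorem \ref{thmextension} and a smooth cutoff, then apply Theorem \ref{thmmain} with $m=0$ to obtain $\tilde{\boldsymbol{V}} \in W^{1,p}_{\mathrm{loc}}(\mathbb{R}^N,\mathbb{R}^N)$ satisfying $\operatorname{curl}\tilde{\boldsymbol{V}} = \operatorname{curl}\tilde{\boldsymbol{W}}$ on $\mathbb{R}^N$. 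The difference $\tilde{\boldsymbol{W}}-\tilde{\boldsymbol{V}}$ is curl-free, so Assumption (A1) furnishes $\tilde\phi \in W^{2,p}_{\mathrm{loc}}(\mathbb{R}^N)$ with $\nabla\tilde\phi = \tilde{\boldsymbol{W}}-\tilde{\boldsymbol{V}}$. On $\Omega^c$ the restriction $\tilde\phi|_{\Omega^c}$ is a primitive of $\tilde{\boldsymbol{W}}-\tilde{\boldsymbol{V}}$; extending it back into $\mathbb{R}^N$ via Theorem \ref{thmextension} and combining in the pattern of Theorem \ref{vd} (forming the analogue of the globally defined combination that vanishes on $\Omega^c$) produces $\boldsymbol{Z} \in W^{1,p}_0(\Omega,\mathbb{R}^N)$ together with the sought $\nabla\phi \in W^{1,p}(\Omega,\mathbb{R}^N)$.

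The principal obstacle is this final bookkeeping: Theorem \ref{thmmain} supplies $\tilde{\boldsymbol{V}}$ matching only the curl of $\tilde{\boldsymbol{W}}$ and exerts no direct control over $\gamma_0(\tilde{\boldsymbol{V}}|_\Omega)$, and the extension-and-subtraction device from Theorem \ref{vd} must be adapted carefully, since there one had $\tilde{\boldsymbol{v}} = 0$ on $\Omega^c$ whereas here $\tilde{\boldsymbol{W}}$ is a Stein extension that does not vanish on $\Omega^c$. Once $\phi$ is produced, the tangential--normal decomposition of $\gamma_0(\nabla\phi) = \boldsymbol{w}$ routinely recovers the prescribed $(\gamma_0\phi,\gamma_1\phi) = (\phi_0,\phi_1)$.
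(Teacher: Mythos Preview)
The paper does not supply its own proof of this theorem: it is quoted with citations for $N=2,3$ and then simply asserted to hold for arbitrary $N$. So there is no ``paper's proof'' to compare against; what can be assessed is whether your sketch would succeed as an independent argument in the paper's framework.

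Your sufficiency argument has a genuine gap. You reduce to finding a decomposition $\boldsymbol{W}=\nabla\phi+\boldsymbol{Z}$ with $\phi\in W^{2,p}(\Omega)$ and $\boldsymbol{Z}\in W^{1,p}_0(\Omega,\mathbb{R}^N)$, and then try to manufacture it from Theorem~\ref{thmmain} plus the extension--subtraction device of Theorem~\ref{vd}. But such a decomposition \emph{cannot} exist for a generic $\boldsymbol{W}\in W^{1,p}(\Omega,\mathbb{R}^N)$: if $\boldsymbol{Z}$ has zero trace then $\gamma_0(\nabla\phi)=\gamma_0(\boldsymbol{W})$, and the tangential part of $\gamma_0(\nabla\phi)$ is always a surface gradient $\nabla_\Gamma(\gamma_0\phi)$. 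Your construction never invokes the hypothesis that the tangential part of $\boldsymbol{w}$ equals $\nabla_\Gamma\phi_0$, so if it worked it would prove something false. Concretely, the step you flag as ``the principal obstacle'' is fatal: since the Stein extension $\tilde{\boldsymbol{W}}$ does not vanish on $\Omega^c$, no rearrangement of $\tilde{\boldsymbol{V}}$, $\nabla\tilde\phi$, and an extension of $\tilde\phi|_{\Omega^c}$ can force the ``$\boldsymbol{Z}$'' piece to vanish outside $\Omega$ --- there is no regularity gain for $\tilde\phi$ on $\Omega^c$ analogous to the one in Theorem~\ref{vd}.

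The fix, and the route consistent with the paper's method for Theorem~\ref{necessaryp=21}, is to spend the hypothesis $\boldsymbol{w}=\nabla_\Gamma\phi_0+\phi_1\boldsymbol{n}$ up front: lift $\phi_0$ by Gagliardo to $\theta\in W^{1,p}(\Omega)$, set $\boldsymbol{v}=\boldsymbol{W}-\nabla\theta$, and check that $\boldsymbol{v}\in L^p_0(\operatorname{curl};\Omega,\mathbb{R}^N)$ (its curl is $\operatorname{curl}\boldsymbol{W}\in L^p$, and $\gamma_t(\boldsymbol{v})=0$ precisely because the tangential part of $\boldsymbol{w}$ is $\nabla_\Gamma\phi_0=\nabla_\Gamma\gamma_0\theta$). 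Then the second line of Theorem~\ref{thmvdd} gives $\boldsymbol{v}=\boldsymbol{Z}+\nabla\eta$ with $\boldsymbol{Z}\in W^{1,p}_0(\Omega,\mathbb{R}^N)$ and $\eta\in W^{1,p}_0(\Omega)$, whence $\phi:=\theta+\eta$ satisfies $\nabla\phi=\boldsymbol{W}-\boldsymbol{Z}\in W^{1,p}(\Omega,\mathbb{R}^N)$, so $\phi\in W^{2,p}(\Omega)$ with $\gamma_0\phi=\phi_0$ and $\gamma_1\phi=\phi_1$ directly (no additive-constant ambiguity).
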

	
	The results of Theorem \ref{tracek=11} hold for arbitrary $N$.
	Let
	$
	\gamma_{2}(\phi) \coloneqq  [ (\nabla^2 \phi \cdot \boldsymbol{n} ) \cdot \boldsymbol{n} ]_{\mid \Gamma} \in L^{p}(\Gamma). 
	$
	Geymonat \cite{geymonat2007trace} gives necessary conditions of the range of trace operator $\left(\gamma_{0}, \gamma_{1}, \gamma_{2}\right)$ for functions in $W^{3, p}(\Omega)$ for $1 < p<\infty$ and arbitrary $N$ as follows.
	\begin{theorem}
		\label{necessaryp=21}
		Let be $\left(\phi_{0}, \phi_{1}, \phi_{2}\right) \in W^{1, p}(\Gamma) \times L^{p}(\Gamma) \times L^{p}(\Gamma)$. Then $\left(\phi_{0}, \phi_{1}, \phi_{2}\right) \in \operatorname{range}\left(\gamma_{0}, \gamma_{1}, \gamma_{2}\right)$ if and only if 
		$$
		\bs \coloneqq \nabla_{\Gamma} \phi_{0}+\phi_{1} \boldsymbol{n} \in W^{1, p}(\Gamma,\mathbb{R}^N).
		$$
		and:
		$$
		\bS \coloneqq \nabla_{\Gamma}\bs + \sum_{i=1}^{N-1} \partial_{{\tau}_{i}}\bs \cdot \boldsymbol{n}  \left( \boldsymbol{n}  \otimes \boldsymbol{\tau}_{i} \right) + \phi_{2}   \left( \boldsymbol{n}  \otimes \boldsymbol{n} \right) \in W^{1-1 / p, p}\left(\Gamma, \mathbb{S} \right).
		$$
	\end{theorem}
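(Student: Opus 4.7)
The theorem is an ``if and only if,'' so both directions are needed.

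\emph{Necessity.} If $\phi \in W^{3,p}(\Omega)$, then $\nabla\phi \in W^{2,p}(\Omega,\mathbb{R}^N)$ has boundary trace in $W^{2-1/p,p}(\Gamma,\mathbb{R}^N) \hookrightarrow W^{1,p}(\Gamma,\mathbb{R}^N)$; decomposing in the frame $(\boldsymbol{\tau}_1,\dots,\boldsymbol{\tau}_{N-1},\boldsymbol{n})$ identifies this trace as $\bs = \nabla_\Gamma\phi_0 + \phi_1\boldsymbol{n}$. Similarly $\nabla^2\phi \in W^{1,p}(\Omega,\mathbb{S})$ has trace in $W^{1-1/p,p}(\Gamma,\mathbb{S})$; its tangent--tangent block equals $\nabla_\Gamma\bs$, its tangent--normal cross entries equal $\partial_{\tau_i}\bs\cdot\boldsymbol{n}$ by Hessian symmetry, and its normal--normal entry equals $\phi_2$ by definition, which together yield the stated formula for $\bS$.

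\emph{Sufficiency.} Given $(\phi_0,\phi_1,\phi_2)$ satisfying the conditions, I would construct $\phi$ in two stages. First, build a vector field $\boldsymbol{F} \in W^{2,p}(\Omega,\mathbb{R}^N)$ with $\boldsymbol{F}|_\Gamma = \bs$ and $(\nabla\boldsymbol{F})|_\Gamma = \bS$ by a vector-valued version of Theorem~\ref{tracek=11}: the required componentwise compatibility $\nabla_\Gamma\bs + \phi_2\boldsymbol{n}\otimes\boldsymbol{n} \in W^{1-1/p,p}(\Gamma,\mathbb{R}^{N\times N})$ is extracted from $\bS \in W^{1-1/p,p}(\Gamma,\mathbb{S})$ after cancelling the contributions $\sum_i(\partial_{\tau_i}\bs\cdot\boldsymbol{n})(\boldsymbol{n}\otimes\boldsymbol{\tau}_i)$, whose $W^{1-1/p,p}$ regularity is forced by the symmetry constraint on $\bS$. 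Second, correct $\boldsymbol{F}$ to a gradient: because $\bS$ is symmetric, the antisymmetric part of $(\nabla\boldsymbol{F})|_\Gamma$ vanishes, so $(\curl\boldsymbol{F})|_\Gamma = 0$ and $\curl\boldsymbol{F}$ lies in $W^{1,p}_0(\Omega,\mathbb{T})$. Zero-extending this curl to $\mathbb{R}^N$, then applying Theorem~\ref{thmmain} with $m=1$ together with the decomposition Theorem~\ref{thmdecompostion}, yields $\boldsymbol{w} \in W^{2,p}_0(\Omega,\mathbb{R}^N)$ (after absorbing an innocuous $\nabla(W^{2,p}_0)$ piece) with $\curl\boldsymbol{w} = \curl\boldsymbol{F}$ on $\Omega$. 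Then $\boldsymbol{F}-\boldsymbol{w}$ is curl-free on $\Omega$ and shares the boundary data of $\boldsymbol{F}$ up to first order; by Assumption~(A1) it equals $\nabla\phi$ for some $\phi\in W^{3,p}(\Omega)$, and unpacking $\nabla\phi|_\Gamma = \bs$ together with $\partial_n\nabla\phi|_\Gamma = \bS\boldsymbol{n}$ in the $(\boldsymbol{\tau}_i,\boldsymbol{n})$ frame recovers exactly $(\phi_0,\phi_1,\phi_2)$.

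\emph{Main obstacle.} The hardest point is extracting the scalar-valued compatibility conditions from $\bS \in W^{1-1/p,p}(\Gamma,\mathbb{S})$ and showing that the symmetry of $\bS$ translates exactly into the vanishing trace of $\curl\boldsymbol{F}$, together with arranging for the corrector $\boldsymbol{w}$ produced by Theorem~\ref{thmmain} to live in $W^{2,p}_0(\Omega,\mathbb{R}^N)$ so that subtracting it preserves the prescribed boundary values. Because $\boldsymbol{n} \in L^\infty(\Gamma,\mathbb{R}^N)$ only, the tangential-commutator identities underlying these computations are not pointwise on a Lipschitz surface and must be justified by a density argument: approximate $(\phi_0,\phi_1,\phi_2)$ by smooth data, carry out the identification in the classical sense, and pass to the limit in the $W^{1,p}(\Gamma)$ and $W^{1-1/p,p}(\Gamma)$ norms.
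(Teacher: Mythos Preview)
Your overall strategy coincides with the paper's: build a vector field $\boldsymbol{F}\in W^{2,p}(\Omega,\mathbb{R}^N)$ whose trace data are $(\bs,\bS)$ (the paper does this componentwise via Lemma~\ref{w1}), use the symmetry of $\bS$ to conclude $\curl\boldsymbol{F}\in W^{1,p}_0(\Omega,\mathbb{T})$, produce a corrector $\boldsymbol{w}\in W^{2,p}_0(\Omega,\mathbb{R}^N)$ with the same curl, and set $\nabla\phi=\boldsymbol{F}-\boldsymbol{w}$.

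The one genuine gap is exactly the point you flag as the main obstacle: arranging $\boldsymbol{w}\in W^{2,p}_0(\Omega,\mathbb{R}^N)$. Zero-extending $\curl\boldsymbol{F}$ and invoking Theorem~\ref{thmmain} is not enough, because Theorem~\ref{thmmain} takes a \emph{vector} $\bv$ as input, and Theorem~\ref{thmdecompostion} (here in the form of Theorem~\ref{vd}) applies only to $\bv\in W^{1,p}_0(\curl;\Omega,\mathbb{R}^N)$, which by definition requires $\bv\in W^{1,p}_0$, not merely $\curl\bv\in W^{1,p}_0$. Your $\boldsymbol{F}$ has nonzero trace $\bs$, so it does not qualify, and a Stein extension of $\boldsymbol{F}$ would give a $\tilde{\boldsymbol{w}}$ that is not a gradient outside $\Omega$, killing the $W^{2,p}_0$ conclusion. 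The paper closes this gap by first producing, via Theorem~\ref{tracek=11}, a scalar $\theta\in W^{2,p}(\Omega)$ with $\gamma_0\theta=\phi_0$ and $\gamma_1\theta=\phi_1$, and then working with $\boldsymbol{v}:=\boldsymbol{F}-\nabla\theta$. Since $\nabla\theta|_\Gamma=\nabla_\Gamma\phi_0+\phi_1\boldsymbol{n}=\bs$, one has $\boldsymbol{v}\in W^{1,p}_0(\Omega,\mathbb{R}^N)$ while $\curl\boldsymbol{v}=\curl\boldsymbol{F}\in W^{1,p}_0(\Omega,\mathbb{T})$; now Theorem~\ref{vd} applies and yields $\boldsymbol{v}=\boldsymbol{w}+\nabla\eta$ with $\boldsymbol{w}\in W^{2,p}_0$ and $\eta\in W^{2,p}_0$. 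The desired function is $\phi=\theta+\eta$, with $\nabla\phi=\boldsymbol{F}-\boldsymbol{w}\in W^{2,p}$ giving $\phi\in W^{3,p}$, and $\gamma_0\phi=\phi_0$, $\gamma_1\phi=\phi_1$ following directly from the traces of $\theta$. This device also removes the additive-constant ambiguity that your recovery of $\phi_0$ from $\nabla\phi|_\Gamma=\bs$ alone would leave, and renders the density/limiting argument unnecessary.
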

	
	Sufficient proof for Theorem \ref{necessaryp=21}  is provided for $N=2$ in \cite{aibeche2023trace}. This paper establishes the sufficiency of Theorem \ref{necessaryp=21} for arbitrary $N$. To demonstrate this sufficiency, the following lemma is essential.
	
	\begin{lemma}
		\label{w1}
		For any $i=1,2,\cdots,N$, let 
		\begin{equation}
			\label{lemmadefw}
			\begin{aligned}
				\phi_{0} & = \bs \cdot \boldsymbol{e}_{i}, \\
				\phi_{1} &= \sum_{j=1}^{N-1}( \partial_{{\tau}_{j}} \bs \cdot \boldsymbol{n} ) \boldsymbol{\tau}_{j} \cdot \boldsymbol{e}_{i} + \phi_{2} \boldsymbol{n} \cdot \boldsymbol{e}_{i},
			\end{aligned}
		\end{equation}
		then there exists a $\omega_{i} \in W^{2,p}(\Omega)$ such that $\gamma_{0}(\omega_{i}) = \phi_{0}$ and $\gamma_{1}(\omega_{i}) = \phi_{1}$.
	\end{lemma}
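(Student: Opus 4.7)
The natural plan is to invoke Theorem~\ref{tracek=11}, which characterizes the range of $(\gamma_0,\gamma_1)$ on $W^{2,p}(\Omega)$. It suffices to check that the pair $(\phi_0,\phi_1)$ defined in \eqref{lemmadefw} satisfies $\phi_0 \in W^{1,p}(\Gamma)$, $\phi_1 \in L^{p}(\Gamma)$, and the compatibility condition $\nabla_{\Gamma}\phi_0+\phi_1\boldsymbol{n}\in W^{1-1/p,p}(\Gamma,\mathbb{R}^{N})$; Theorem~\ref{tracek=11} then directly produces the desired $\omega_i$. The first two requirements are immediate: since $\bs\in W^{1,p}(\Gamma,\mathbb{R}^{N})$, the component $\phi_0=\bs\cdot\boldsymbol{e}_i$ inherits $W^{1,p}$ regularity, and $\phi_1$ is a finite linear combination of products of tangential derivatives of $\bs$ (which lie in $L^p$) with the Lipschitz functions $\boldsymbol{\tau}_j,\boldsymbol{n}$, together with the $L^p$ term $\phi_2 n_i$, so $\phi_1\in L^p(\Gamma)$.

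The crux is the compatibility condition. The key identity I would establish is the pointwise equality
\begin{equation*}
\nabla_{\Gamma}\phi_0+\phi_1\boldsymbol{n}=\bS\,\boldsymbol{e}_i\quad\text{on }\Gamma,
\end{equation*}
that is, the vector $\nabla_{\Gamma}\phi_0+\phi_1\boldsymbol{n}$ is simply the $i$-th column of the symmetric matrix $\bS$. Expanding both sides in a local orthonormal frame $(\boldsymbol{\tau}_1,\ldots,\boldsymbol{\tau}_{N-1},\boldsymbol{n})$: the $k$-th component of $\nabla_{\Gamma}\phi_0$ equals $\sum_{l}(\partial_{\tau_l}s_i)(\boldsymbol{\tau}_l)_k$, which reproduces the tangential-tangential block $(\nabla_\Gamma\bs)\boldsymbol{e}_i$ of $\bS$; while $(\phi_1\boldsymbol{n})_k=\bigl[\sum_{j}(\partial_{\tau_j}\bs\cdot\boldsymbol{n})(\boldsymbol{\tau}_j)_i+\phi_2 n_i\bigr]n_k$ reproduces the mixed $\boldsymbol{n}\otimes\boldsymbol{\tau}_j$ and the normal-normal $\boldsymbol{n}\otimes\boldsymbol{n}$ contributions, after using the symmetry $\bS_{ki}=\bS_{ik}$ guaranteed by the hypothesis $\bS\in\mathbb{S}$. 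Once this identity is in hand, the assumption $\bS\in W^{1-1/p,p}(\Gamma,\mathbb{S})$ of Theorem~\ref{necessaryp=21} immediately yields $\nabla_\Gamma\phi_0+\phi_1\boldsymbol{n}\in W^{1-1/p,p}(\Gamma,\mathbb{R}^N)$, and Theorem~\ref{tracek=11} concludes the construction of $\omega_i$.

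The main obstacle will be the algebraic bookkeeping of the identity above. The delicate point is that the entries of $\bS$ along the $\boldsymbol{n}\otimes\boldsymbol{\tau}_j$ directions carry factors of the form $n_{\,\cdot\,}(\boldsymbol{\tau}_j)_{\,\cdot\,}$, while $\phi_1\boldsymbol{n}$ naturally produces factors of the form $(\boldsymbol{\tau}_j)_{\,\cdot\,}n_{\,\cdot\,}$; distinguishing between rows and columns of $\bS$ and invoking its symmetry at the correct moment is precisely where the constraint $\bS\in\mathbb{S}$ is essential. Fixing a consistent convention for $\nabla_\Gamma\bs$ that matches the one used in the definition of $\bS$ in Theorem~\ref{necessaryp=21}, and carrying out the expansion on each patch $\Gamma_i$ of the parametrization, is the only genuinely delicate point of the proof.
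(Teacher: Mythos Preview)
Your proposal is correct and follows essentially the same route as the paper: both reduce the lemma to verifying the pointwise identity $\nabla_{\Gamma}\phi_{0}+\phi_{1}\boldsymbol{n}=\bS\cdot\boldsymbol{e}_{i}$ and then invoke Theorem~\ref{tracek=11}. One small remark: with the paper's convention $\nabla_{\Gamma}(\cdot)=\sum_{j}\boldsymbol{\tau}_{j}\otimes\partial_{\tau_{j}}(\cdot)$ and the definition of $\bS$ in Theorem~\ref{necessaryp=21}, the identity $\nabla_{\Gamma}\phi_{0}+\phi_{1}\boldsymbol{n}=\bS\cdot\boldsymbol{e}_{i}$ holds directly by inspection, without invoking the symmetry of $\bS$; the symmetry hypothesis enters only later, in the proof of Theorem~\ref{necessaryp=21} itself.
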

	
	\begin{proof}
		Only the lemma needs to be proved for the case $i=1$.
		Recall the definition of $\bs$ and $\bS$ in Theorem \ref{necessaryp=21}, it follows from \eqref{lemmadefw} that
		$$
		\begin{aligned}
			\nabla_{\Gamma}\phi_{0} + \phi_{1} \boldsymbol{n} 
			&=  \nabla_{\Gamma} \bs  \cdot \boldsymbol{e}_{1} +  \gamma_{1}(\omega_{1}) \boldsymbol{n}   \\
			& = \left[ \nabla_{\Gamma}\bs + \sum_{j=1}^{N-1} \partial_{{\tau}_{j}}(\bs)\boldsymbol{n}  \left( \boldsymbol{n}  \otimes \boldsymbol{\tau}_{j} \right) + \phi_{2}   \left( \boldsymbol{n}  \otimes \boldsymbol{n} \right) \right]  \cdot \boldsymbol{e}_{1}\\ 
			& = \bS \cdot \boldsymbol{e}_{1}   \in W^{1-1 / p, p}(\Gamma, \mathbb{R}^N).
		\end{aligned}
		$$    
		By Theorem \ref{tracek=11}, which provides a sufficient condition for functions in $W^{2,p}(\Omega)$, the proof is completed.
	\end{proof}
	
	The demonstration of the sufficiency of Theorem \ref{necessaryp=21} proceeds as follows.
	\begin{proof}[Sufficient proof of Theorem \ref{necessaryp=21}]
		The proof can be divided into three steps. \\
		{\textbf{Step 1} constructs $\boldsymbol{v}=(\omega_{1},\omega_{2},\cdots, \omega_{N})- \nabla \theta \in W^{1,p}(\Omega,\mathbb{R}^N) $}.  
		By Theorem \ref{tracek=11}, there exists a function $\theta \in W^{2,p}(\Omega)$ satisfies $\gamma_{0}(\theta) = f_{0}$ and $\gamma_{1}(\theta) =f_{1}$. Additionally, one can construct $(\omega_{1},\omega_{2},\cdots, \omega_{N})  \in W^{2,p}(\Omega,\mathbb{R}^N)$ using the trace function $\bs$ due to Lemma \ref{w1}. \\
		\textbf{Step 2} decomposes $\boldsymbol{v} = \bw + \nabla \eta$. It can be verified that 
		$$\gamma_{0}( v_i ) = \gamma_{0}(\omega_i ) - \gamma_{0}( \nabla \theta )\cdot \boldsymbol e_{i} = \bs \cdot \boldsymbol e_{i} - (\nabla_{\Gamma} \phi_{0} + \phi_{1}\boldsymbol n) \cdot \boldsymbol e_{i}  = 0.$$
		Thus $\boldsymbol v \in W^{1,p}_{0}(\Omega,\mathbb{R}^N)$. Additionally, one can verify that 
		$$
		\begin{aligned}
			[\operatorname{curl} \boldsymbol v  ]_{ij}&= [\operatorname{curl} (\omega_{1},\omega_{2},\cdots, \omega_{N}) - \operatorname{curl} (\nabla \theta) ]_{ij} \\
			& =\frac{1}{2} \left(\frac{\partial \omega_j}{\partial x_i}-\frac{\partial \omega_i}{\partial x_j} \right) \in W^{1,p}(\Omega, \mathbb{T}) , \quad 1 \leq i, j \leq N.  \\
			[\gamma_{0} (\operatorname{curl} \boldsymbol v )]_{ij}& = \frac{1}{2} (\gamma_{0}(\nabla \omega_{j}) \cdot \boldsymbol{e}_i -\gamma_{0}(\nabla \omega_{i}) \cdot \boldsymbol{e}_j ) \\
			&= \bS \cdot \boldsymbol{e}_{j} \cdot\boldsymbol{e}_{i} - \bS \cdot \boldsymbol{e}_{i} \cdot\boldsymbol{e}_{j} =0, \quad 1 \leq i, j \leq N.
		\end{aligned}
		$$
		It follows from $\bS \in \mathbb{S}$ that $\operatorname{curl} \boldsymbol v \in W^{1,p}_{0}(\Omega,\mathbb{T})$. Theorem \ref{thmmain} implies that there exist $\bw \in {W}_{0}^{2,p}(\Omega,\mathbb{R}^N) $ and $\eta \in W^{2,p}_{0}(\Omega)$ satisfy
		\begin{equation}
			\label{exist}
			\boldsymbol{v} = \bw + \nabla \eta,  \text{ in } \Omega.
		\end{equation}
		Combining \eqref{exist} with $\boldsymbol v = (\omega_{1},\omega_{2},\cdots, \omega_{N})- \nabla \theta$ yields
		\begin{equation}
			\label{eqkey}
			\nabla( \eta +\theta) =  (\omega_{1},\omega_{2},\cdots, \omega_{N}) - \bw, \text{ in } \Omega.
		\end{equation}
		{\textbf{Step 3} verifies that $\phi = \eta + \theta$ satisfies the following condition} 
		\begin{equation}
			\label{eqverify}
			\begin{aligned}
				\gamma_{0}(\phi) &= \gamma_{0}(\eta)  + \gamma_{0}(\theta) = 0 + \phi_{0} = \phi_{0},\text{ on } \Gamma. \\
				\gamma_{1}(\phi)  &= \gamma_{1}(\eta)  + \gamma_{1}(\theta) = 0 + \phi_{1} = \phi_{1}, \text{ on } \Gamma.\\
				\gamma_{2}(\phi)  &= \nabla(\nabla \phi )  \cdot \boldsymbol{n} \cdot \boldsymbol{n} \\
				&= \nabla (\omega_{1},\omega_{2},\cdots, \omega_{N})  \cdot \boldsymbol{n} \cdot \boldsymbol{n} -  \nabla \bw \cdot \boldsymbol{n}\\
				& = \bS \cdot  \boldsymbol{n} \cdot \boldsymbol{n}  = \phi_{2},  \text{ on } \Gamma.  \\
			\end{aligned}
		\end{equation}
		{	The proof is completed by noting that $\phi \in W^{3,p}(\Omega)$ since \eqref{eqkey} and \eqref{eqverify}.}
	\end{proof}

	Let $q > 0$ be any positive integer. Introduce notations as follows
	\begin{equation*}
		\begin{aligned}
			(\boldsymbol{n}\otimes^q) & \coloneqq \overbrace{ \boldsymbol{n}\otimes \boldsymbol{n} \otimes ~ \cdots ~\otimes \boldsymbol{n} }^{q} ,  \\
			(\boldsymbol{n}\cdot^q) &\coloneqq  \overbrace{  \boldsymbol{n} \cdot \boldsymbol{n} \cdot ~\cdots~ \cdot \boldsymbol{n} }^{q} , \\
			\nabla_{\Gamma} (\cdot ) &\coloneqq \sum_{i=1}^{N-1} \boldsymbol{\tau}_i \otimes \partial_{\tau_i} (\cdot).
		\end{aligned}
	\end{equation*}
	Without causing ambiguity, $\mathbb{S}$ will be used to represent symmetric tensors in higher-order tensors, namely, if a tensor belongs to $\mathbb{S}$, swapping any two component indices leaves the tensor unchanged.
	Now it is time to give the generalized trace theorem.
	\begin{theorem}[Trace theorem for functions in $W^{m,p}(\Omega)$]
		\label{thmgeneraltrace}
		Let $\Omega$ be a bounded Lipschitz domain in $\mathbb{R}^N$. Given $(\phi_{0},\phi_{1},\dots,\phi_{m-1})\in W^{1,p}(\Gamma)\times L^p(\Gamma)\times \cdots \times L^{p}(\Gamma)$. Let $S_0 \coloneqq \phi_0$, 
		$$
		\bS_m \coloneqq \nabla_{\Gamma} \bS_{m-1} + \sum_{k=1}^{m-1} \sum_{i=1}^{N-1} (\boldsymbol{n} \otimes^k ) \otimes \boldsymbol{\tau}_i  \otimes \partial_{\tau_{i}} (\bS_{m-1}) \cdot (\boldsymbol{n} \cdot ^k)  +(\boldsymbol{n} \otimes^m) \phi_{m-1}, ~\forall m > 0.
		$$
		Then there exists $\phi \in W^{m,p}(\Omega)$ such that 
		$$\mathrm{Tr}_{m-1}\phi = (\gamma_{0}(\phi), \gamma_{1}(\phi), \dots, \gamma_{m-1}(\phi)) = (\phi_{0},\phi_{1},\dots,\phi_{m-1}),$$ 
		if and only if 
		$$
		\bS_q \in W^{1,p}(\Gamma,\mathbb{S}), 0 \leq q \leq m-2,  \quad \bS_{m-1} \in W^{1-1/p,p}(\Gamma,\mathbb{S} ).
		$$
	\end{theorem}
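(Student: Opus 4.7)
The plan is to proceed by induction on $m$, using the base cases $m\in\{1,2,3\}$ (Gagliardo's theorem, Theorem \ref{tracek=11}, Theorem \ref{necessaryp=21}) together with the generalized vector potential and decomposition results of Section \ref{sec3}. The template is the $m=3$ proof: construct scalar and vector proxies whose traces are prescribed by the inductive hypothesis, form the difference $\boldsymbol{v}=(\omega_1,\ldots,\omega_N)-\nabla\theta$, decompose it via Theorem \ref{thmdecompostion}, and recombine to produce $\phi$.

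Necessity is the easier direction: if $\phi\in W^{m,p}(\Omega)$, then $\nabla^q\phi\in W^{m-q,p}(\Omega)$ has trace on $\Gamma$ in $W^{m-q-1/p,p}(\Gamma)$. Expanding $\nabla^q\phi|_\Gamma$ in the adapted basis $(\boldsymbol{\tau}_1,\ldots,\boldsymbol{\tau}_{N-1},\boldsymbol{n})$, the component with $k$ normal factors is a known combination of iterated tangential derivatives of $\phi_{k-1}$ and lower-order $\phi_j$'s; an induction on $q$ identifies this symmetric $q$-tensor with $\bS_q$, so $\bS_q\in W^{m-q-1/p,p}(\Gamma,\mathbb{S})\hookrightarrow W^{1,p}(\Gamma,\mathbb{S})$ for $q\le m-2$ and $\bS_{m-1}\in W^{1-1/p,p}(\Gamma,\mathbb{S})$.

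For sufficiency I would assume the theorem at orders less than $m$. Since $\bS_{m-2}\in W^{1,p}\hookrightarrow W^{1-1/p,p}$, the truncated data $(\phi_0,\ldots,\phi_{m-2})$ already satisfies the hypotheses at order $m-1$, so the inductive hypothesis furnishes $\theta\in W^{m-1,p}(\Omega)$ with $\gamma_k(\theta)=\phi_k$ for $0\le k\le m-2$. Next, for each $i=1,\ldots,N$, I would define on $\Gamma$ a data sequence $(\psi_0^{(i)},\ldots,\psi_{m-2}^{(i)})$ prescribing the values $\gamma_k(\partial_i\phi)$ must take, namely $\psi_k^{(i)}$ obtained by contracting $\bS_{k+1}$ in one slot with $\boldsymbol{e}_i$ and reading off the $\boldsymbol{n}^{\otimes k}$-component of the remaining slots, which is exactly the tangential-normal expansion of $\nabla^k(\partial_i\phi)|_\Gamma$. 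A direct manipulation of the recursion defining $\bS$ shows that these $\psi_k^{(i)}$ satisfy the $W^{m-1,p}$ trace conditions, so the inductive hypothesis supplies $\omega_i\in W^{m-1,p}(\Omega)$ realizing them; this is the order-$m$ analogue of Lemma \ref{w1}.

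Then I would set $\boldsymbol{v}=(\omega_1,\ldots,\omega_N)-\nabla\theta\in W^{m-2,p}(\Omega,\mathbb{R}^N)$. By construction the traces of $\omega_i$ agree with those of $\partial_i\theta$ at orders $0,\ldots,m-3$, so $\boldsymbol{v}\in W^{m-2,p}_0(\Omega,\mathbb{R}^N)$; moreover $[\curl\boldsymbol{v}]_{ij}=\tfrac12(\partial_i\omega_j-\partial_j\omega_i)$ lies in $W^{m-2,p}(\Omega,\mathbb{T})$, and its traces up to order $m-3$ vanish precisely because $\bS_{m-1}\in\mathbb{S}$: the antisymmetrization in $\curl$ annihilates the symmetric contributions. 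Thus $\boldsymbol{v}\in W^{m-2,p}_0(\curl;\Omega,\mathbb{R}^N)$, and Theorem \ref{thmdecompostion} at level $m-2$ decomposes $\boldsymbol{v}=\bw+\nabla\eta$ with $\bw\in W^{m-1,p}_0(\Omega,\mathbb{R}^N)$ and $\eta\in W^{m-1,p}_0(\Omega)$. Setting $\phi\coloneqq\theta+\eta$ yields $\nabla\phi=(\omega_1,\ldots,\omega_N)-\bw\in W^{m-1,p}(\Omega,\mathbb{R}^N)$, so $\phi\in W^{m,p}(\Omega)$; the identities $\gamma_k(\phi)=\phi_k$ for $k\le m-2$ are inherited from $\theta$, while $\gamma_{m-1}(\phi)=\phi_{m-1}$ is extracted by contracting $\nabla\phi|_\Gamma=(\omega_1,\ldots,\omega_N)|_\Gamma$ with $\boldsymbol{n}^{\otimes(m-1)}$ and invoking the last term in the definition of $\bS_{m-1}$. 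The main obstacle is the combinatorial bookkeeping at two points: (i) verifying that the derived data $(\psi_k^{(i)})$ are admissible for the $W^{m-1,p}$ trace theorem, and (ii) showing that the higher-order curl traces vanish---both reduce to unwinding the recursion for $\bS$ and exploiting the symmetry hypothesis $\bS_{m-1}\in\mathbb{S}$.
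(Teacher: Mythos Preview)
Your proposal is correct and follows essentially the same approach as the paper. The paper itself does not spell out a full proof of Theorem \ref{thmgeneraltrace}; it only states that the result ``can be established using a recursive method'' whose ``primary procedures are akin to those used in Theorem \ref{necessaryp=21}, differing in that several critical results for the high-dimensional case are derived from Theorem \ref{thmdecompostion}''---which is exactly the inductive scheme you describe (build $\theta$ and the $\omega_i$ from the order-$(m-1)$ hypothesis, form $\boldsymbol{v}$, land in $W^{m-2,p}_0(\curl;\Omega,\mathbb{R}^N)$, decompose via Theorem \ref{thmdecompostion}, and set $\phi=\theta+\eta$).
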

	
Theorem \ref{thmgeneraltrace} can be established using a recursive method. The primary procedures are akin to those used in Theorem \ref{necessaryp=21}, differing in that several critical results for the high-dimensional case are derived from Theorem \ref{thmdecompostion}.

	\bibliographystyle{plain}
	\bibliography{reference}

\end{document}